\def\P{{\mathchoice {\hbox{$\sf\textstyle P\kern-0.4em Z$}}
{\hbox{$\sf\textstyle P\kern-0.4em P$}}
{\hbox{$\sf\scriptstyle P\kern-0.3em P$}}
{\hbox{$\sf\scriptscriptstyle P\kern-0.2em P$}}}}
\def\P{{\mathchoice {\hbox{$\sf\textstyle P\kern-0.4em Z$}}
{\hbx{$\sf\textstyle P\kern-0.4em P$}}
{\hbox{$\sf\scriptstyle P\kern-0.3em P$}}
{\hbox{$\sf\scriptscriptstyle P\kern-0.2em P$}}}}
\let\Rightarrow=\Rightarrow
\newtheorem{fait}{Claim}
\def\ex#1\par{\par\noindent\begin{exemple} \nopagebreak \strut \rm #1 \end{exemple}}
\def\thm#1\par{\medskip\par\noindent\begin{theorem} \strut \sl #1 \end{theorem}\par}
\def\propo#1\par{\medskip\par\noindent\begin{proposition} \strut \sl #1 \end{proposition} \par}
\def\proo#1\par{\medskip\par\noindent{\it Proof.} \strut \rm #1 $\Box$ \par}
\def\prob#1\par{\medskip\par\noindent\begin{prob} \strut \sl #1 \end{prob} \par}
\def\cor#1\par{\medskip\par\noindent\begin{corollary} \strut \sl #1 \end{corollary}\par}
\def\lm#1\par{\medskip\par\noindent\begin{lemma} \strut \sl #1 \end{lemma}\par}
\def\defil#1\par{\medskip\par\noindent\begin{condit} \strut \sl #1\end{condit}\par}
\def\fct#1\par{\par\noindent\begin{fait}  \nopagebreak \strut #1 \end{fait}}
\def\fct#1\par{\par\noindent\begin{claim} \nopagebreak \strut #1 \end{claim}}
\def\defi#1\par{\medskip\par\noindent{\begin{defin} \strut  \sl #1 \end{defin}}\par}
\def\nota#1\par{\par\noindent\begin{notat} \nopagebreak  \strut #1 \end{notat}}
\def\rem#1\par{\par\noindent\begin{rema} \nopagebreak \strut \rm #1 \end{rema}}
\date{}
\begin{document}

\thispagestyle{empty}
\title{When Variable-Length Codes Meet  the Field of Error Detection}
\titlerunning{When Variable-Length Codes Meet the Field of Error Detection}
\author{Jean N\'eraud \footnote{Corresponding author}}
\institute{Universit\'e de Rouen, Laboratoire d'Informatique, de Traitement de l'Information et des Syst\`emes, 
 Avenue de l'Universit\'e, 76800 Saint-\'Etienne-du-Rouvray, France\\
\email{neraud.jean@gmail.com}\\
\url{http:neraud.jean.free.fr; orcid: 0000-0002-9630-461X} 
}
\toctitle{v}
\tocauthor{J.~N\'eraud}
\authorrunning{J. N\'eraud}
\maketitle
\setcounter{footnote}{0}
\begin{abstract}
Given  a finite alphabet $A$ and a  binary relation $\tau\subseteq A^*\times A^*$, a  set $X$
is $\tau$-{\it independent}
 if $ \tau(X)\cap X=\emptyset$. 
Given a quasi-metric $d$ over $A^*$ (in the meaning of \cite{W31}) and   $k\ge 1$,  we associate the relation $\tau_{d,k}$
defined by $(x,y)\in\tau_{d,k}$ if, and only if, $d(x,y)\le k$ \cite{CP02}.
In the spirit of \cite{JK97,N21}, the error detection-correction  capability of variable-length codes  can be expressed in term of  conditions over $\tau_{d,k}$.
With respect to  the prefix metric, the  factor one, and every quasi-metric  associated to (anti-)automorphisms of the free monoid, we examine whether those conditions are decidable for a given regular code.
\keywords{
 Anti-reflexive, automaton, automorphism, anti-automorphism, Bernoulli measure, binary relation, channel,
code, codeword, complete, distance,
error correction, error detection,
embedding, factor, free monoid, homomorphism, independent,  input word, Kraft inequality,
maximal, measure, 
metric, monoid,  output word, prefix, quasi-metric, regular, subsequences,
suffix, synchronization constraint, transducer, variable-length code,
word}
\end{abstract}
\section{Introduction}
In Computer Science, the transmission of finite sequences of symbols (the so-called {\it words}) via some  channel
constitutes one of the most challenging research fields.
With the notation of the free monoid theory, some classical models 
may be informally described as indicated in the following:

Two finite  {\it alphabets}, say  $A$ and $B$,  are required,
every information being modeled by a unique word, say  $u$,  in $B^*$ (the {\it free monoid} generated by $B$). 
Usually, in order to  facilitate the transmission, beforehand $u$ is transformed in  $w\in A^*$, the so-called {\it input word}: this is  done by applying some fixed one-to-one {\it coding} mapping  $\phi: B^*\longrightarrow A^*$. 
In numerous cases, $\phi$ is a {\it monoid homomorphism},
whence $X=\phi(B)$ is a  {\it  variable-length code} (for short, a {\it code}): equivalently 
every equation among the words of $X$ is necessarily trivial. Such a translation  is particularly illustrated by the well-known examples of the Morse and Huffman codes. 
Next, $w$ is transmitted via a fixed {\it channel} into $w'\in A^*$, the so-called {\it output word}: should $w'$ be altered by some
{\it noise}
and then the resulting word $\phi^{-1}(w')\in B^*$ could be different from the initial word $u$.
In the most general model of  transmission, the channel is represented by some {\it probabilistic transducer}.
However, in the framework of error detection, most of the models only require that highly likely errors need to be taken into account:
in the present paper, 
we assume the transmission channel modeled by some {\it binary word relation}, namely $\tau\subseteq A^*\times A^*$.
In order to retrieve $u$, the homomorphism $\phi$, and thus the code $X$, must satisfy specific
constraints, which of course depend of the channel $\tau$:
in view of some  formalization, we denote by $\widehat\tau$ the {\it reflexive closure} of $\tau$, and by  $\underline\tau$ its  {\it anti-reflexive restriction} that is, 
$\tau\setminus\{(w,w)|w\in A^*\}$.

About the channel itself, 
the so-called 
{\it synchronization constraint} appears mandatory: it states that,
for each input word  factorized $w=x_1\cdots x_n$, where $x_1,\cdots,x_n$ are {\it codewords} in $X$,
every output word has to be factorized 
$w'= x'_1\cdots x'_n$, with $(x_1,x'_1),\cdots, (x_n,x'_n)\in\widehat\tau$. 
In order to ensure such a constraint,  as for the Morse code, some pause symbol could be inserted after each codeword $x_i$. 

With regard to the code $X$,  in order to minimize the number of errors, in most cases 
some close neighbourhood constraint over  $\widehat\tau(X)$  is applied. 
In the most frequent use, such a constraint consists of some minimal distance condition:
the smaller the distance between the input codeword $x\in X$ and any of its  corresponding output words $x'\in \widehat\tau(X)$,
the more optimal is error detection. 
In view of that, we fix over $A^*$ a {\it quasi-metric} $d$, in the meaning of \cite{W31} (the difference with a {\it metric} is that $d$ needs not to satisfy the symmetry axiom).
As outlined in \cite{CP02}, given an error tolerance level  $k\ge 0$, a corresponding  binary word relation, that we denote by  $\tau_{d,k}$,  can be associated in such a way that 
$(w,w')\in\tau_{d,k}$ is equivalent to $d(w,w')\le k$.
Below, in the spirit of \cite{JK97,N21}, we draw some specification regarding  error detection-correction capability.
Recall that a subset  $X$ of $A^*$ is {\it independent} with respect to  $\tau\subseteq A^*\times A^*$ (for short, $\tau$-{\it independent}) whenever $\tau(X)\cap X=\emptyset$: this notion, which appears dual with the one  of  {\it closed code} \cite{N21},  relies  to the famous {\it dependence systems} \cite{C81,JK97}.
Given a family of  codes, say ${\cal F}$, a code  $X\in\cal F$  is {\it maximal in $\cal F$}
whenever $X\subseteq Y$, with $Y\in {\cal F}$, implies $Y=X$.
We introduce the four following conditions:
\begin{enumerate} [label={\rm (c\arabic*)},wide=10pt]
\item \label{1} 
{\it Error detection:}
 $X$ is $\uline{\tau_{d,k}}$-{\it independent}. 
\item
\itemsep2pt
\label{2}
{\it Error correction:}
$x,y\in X$ and $\tau_{d,k}(x)\cap\tau_{d,k}(y)\neq\emptyset$ implies  $x=y$.
\item \label{3}   {\it $X$ is maximal in the family of \uline{$\tau_{d,k}$}-independent  codes}.
\item \label{4}  {\it $\widehat{\tau_{d,k}}(X)$ is a code.}
\end{enumerate}
A few  comments on Conds. \ref{1}--\ref{4}:

-- By definition, Cond. \ref{1} is satisfied 
 if, and only if, the distance between different elements of $X$ is greater than $k$
that is,  the code $X$ can detect at most $k$ errors in the transmission of every codeword $x\in X$.

-- Cond. \ref{2}  states a classical definition: 
equivalently, for every codeword $x$ we have $\tau_{d,k}^{-1}\left(\tau_{d,k}(x)\right)\cap X=\{x\}$ whenever $ \tau_{d,k}(x)$ is non-empty \cite{N21}.  

-- With Cond. \ref{3}, in the family of \uline{$\tau_{d,k}$}-independent codes, $X$ cannot be improved. From this point of view,  fruitful investigations have been done in several classes determined by code properties \cite{JKK01,KKK14,L00,L01}.
On the other hand, according to the famous Kraft inequality, given a {\it positive Bernoulli measure} over $A^*$,  say $\mu$, for every
 code $X$ we have $\mu(X)\le 1$.
According to a famous result due to Sch\"utzenberger, given a {\it regular} code $X$, the condition $\mu(X) =1$ itself corresponds to $X$ being maximal in the whole family of codes, or equivalently $X$ being  {\it complete} that is, every word in $A^*$ is a {\it factor}
of some word in $X^*$. From this last point of view, no part of $X^*$ appears spoiled.

-- At last, Cond. \ref{4} 
expresses that the factorization of every output
message over the set  $\widehat{\tau_{d,k}}(X)$ is done in a unique way.  Since $d$ is a quasimetric, the corresponding relation ${\tau_{d,k}}$ is reflexive, therefore Cond. \ref{4} is equivalent to ${\tau_{d,k}}(X)$ being a code.

\smallskip
As shown in \cite{JK97,N21}, there are regular codes satisfying \ref{1} that cannot satisfy  \ref{2}.
Actually, in most of the cases it could be very difficult, even impossible, to satisfy all together Conds. \ref{1}--\ref{4}: 
necessarily some compromise has to be adopted.
In view of this, given a regular code $X$, a natural question consists in examining whether each of  those conditions is satisfied  in the framework of some special quasi-metric.
From this point of view, in \cite{N21}, we considered  the so-called  {\it edit relations},
 some peculiar  compositions of one-character {\it deletion}, {\it insertion}, and {\it substitution}:
such relations involve  the famous Levenshtein and Hamming metrics \cite{H50,K83,L65}, which  are prioritary related to subsequences in words.
In the present paper, we  focuse on quasimetrics  rather involving factors:

-- The {\it prefix} metric is defined  by $d_{\rm P}(w,w')=|w|+|w'|-2|w\wedge w'|$,
where  $w\wedge w'$ stands for the longest common {\it prefix} of $w$ and $w'$:
we set ${\cal P}_k=\tau_{d_{\rm P},k}$.

-- The {\it factor} metric itself is
defined by $d_{\rm F}(w,w')=|w|+|w'|-2|f|$, where $f$ is a maximum length  common factor of $w$, $w'$: we set ${\cal F}_k=\tau_{d_{\rm F},k}$.

-- A third type of topology can be introduced in connection with {\it monoid automorphisms} or {\it anti-automorphisms} (for short, we write {\it (anti-)automorphisms}):
such a topology particularly concerns the domain of  DNA sequence comparison. 
By anti-automorphism of the free monoid, we mean a one-to-one mapping onto $A^*$, say $\theta$, 
such that  the equation  $\theta(uv)=\theta(v)\theta(u)$ holds for  every $u,v\in A^*$
(for involvements of such mappings in the dual notion of closed code, see \cite{NS20}).
With every  (anti-)automorphism $\theta$ we associate  the quasi-metric $d_\theta$, defined as follows:

\smallskip
(1) $d_\theta(w,w')=0$ is equivalent to $w=w'$;

(2) we have $d_\theta(w,w')=1$ whenever $w'=\theta(w)$ holds, with $w\ne w'$;

(3)  in all other cases we set $d_\theta(w,w')=2$. 

\smallskip 
\noindent
  By definition we have $\tau_{{d_\theta},1}=\widehat{\theta}$ and $\underline{\tau_{_\theta,1}}=\underline\theta$.
In addition, a code $X$ is $\underline\theta$-independent if, and only if, for every pair of different words $x,y\in X$, we have $d_\theta(x,y)=2$ that is, $X$ is capable to detect at most one error.
{\flushleft We will establish  the following result:}
{\flushleft
{\bf Theorem.}}
{\it With the preceding notation, given a regular code $X\subseteq A^*$,
it can be decided whether $X$ satisfies each of the following conditions:

{\rm (i)}  Conds.  \ref{1}--\ref{4} wrt.   ${\cal P}_k$.

{\rm (ii)}   Conds.  \ref{3}, \ref{4} wrt.  ${\cal F}_k$.

{\rm (iii)} If   $X$ is  finite, Conds. \ref{1}, \ref{2} wrt.  ${\cal F}_k$.

{\rm (iv)}  Conds. \ref{3}, \ref{4} wrt. $\widehat \theta$, for any (anti-)automorphism  $\theta$  of $A^*$.
In that framework,  in any case $X$ satisfies Conds. (c1), (c2), which  are actually equivalent.}\\


\noindent  Some comments regarding the proof:

-- For proving that $X$ satisfies Cond. \ref{1} wrt. ${\cal P}_k$, the main argument consists in establishing that  $\underline{{\cal P}_k}$
 is a  {\it regular} relation \cite[Ch. IV]{S03} that is, $\underline{{\cal P}_k}$   can be simulated by a finite transducer.

-- Once more wrt. ${\cal P}_k$, we actually prove that  Cond. \ref{2} is equivalent to $(X\times X)\cap \underline{{\cal P}_{2k}}=\emptyset$. 
Since $X\times X$ is  a {\it recognizable} relation \cite [Ch. IV]{S03}, it can be decided whether that equation holds.

-- Regarding Cond. \ref{3}, the critical step is reached by proving that, wrt. each of the quasi-metrics raised in the paper, for a regular code $X$, being  maximal in the family of $\underline{\tau_{d,k}}$-independent codes is equivalent to being complete that is, $\mu(X)=1$.
This is done by proving that every non-complete $\underline{\tau_{d,k}}$-independent code, say $X$,  can be embedded into some complete $\underline{\tau_{d,k}}$-independent one: 
in other words, $X$ cannot be maximal in such a  family of codes.
 In order to establish such a property, 
in the spirit of  \cite{BWZ90,L03,N06,N08,NS20,ZS95}, we provide specific regularity-preserving embedding formulas,
whose schemes are based upon the method from \cite{ER85}.
Notice that,  in   \cite{JKK01,KM15,L00,L01,VVH05}, wrt.  peculiar families of sets, algorithmic methods for embedding a set into  some maximal one were also provided.

-- With regard to Cond. \ref{4}, for each of the preceding relations, the set $\widehat{\tau_{d,k}}(X)={\tau_{d,k}}(X)$ is regular therefore, in any case, 
by applying the famous Sardinas and Patterson algorithm \cite{SP53}, one can decide whether that condition is satisfied.

\medbreak
\noindent We now shorty describe the contents of the paper:

-- Section \ref{prelim} is devoted to the preliminaries: we recall fundamental notions over codes,  regular (resp., {\it recognizable}) relations, and automata. 

-- The aim of Sect. \ref{Prefix-metric} is to study ${\cal P}_k$. We prove that, in any case, the corresponding  relation $\underline{{\cal P}_k}$ is itself regular.
Furthermore, given a regular code $X$, one can decide whether $X$ satisfies any of   Conds. \ref{1}--\ref{4}. Some remarks are also formulated regarding the so-called  {\it suffix} metric.
 
-- Sect. \ref{Phi} is concerned with the factor metric. We prove that, given a finite code, one can decide whether it satisfies any of  Conds. \ref{1}--\ref{4}.
For a non-finite regular codes, we prove that one can decide whether it satisfies Conds. \ref{3}, \ref{4}, however,  the question of the decidability of Conds. \ref{1}, \ref{2} remains open.

-- Sect. \ref{antiautomorphism} is devoted to 
(anti-)automorphisms. We obtain results similar to those involving the  relation ${\cal P}_k$.
 
-- In Sect. \ref{conclusion}, the paper concludes  with some possible directions for further research. 
\section{Preliminaries}
\label{prelim}
Several definitions and notation has already been stated in the introduction. In the whole paper, we fix a finite alphabet $A$, with $|A|\ge 2$, and we denote by $\varepsilon$ 
the word of length $0$.
Given two words $v,w\in A^*$, $v$ is a {\it prefix} (resp., {\it suffix}, {\it factor}) of $w$ if 
words $u,u'$ exist such that $w=vu$ (resp., $w=u'v$, $w=u'vu$). 
We denote by ${\rm P}(w)$ (resp.,  ${\rm S}(w)$, ${\rm F}(w)$) the set of the words that are prefix (resp., suffix,  factors) of $w$.
More generally,  given a set $X\subseteq A^*$, we set  ${\rm P}(X)=\bigcup_{w\in X}{\rm P}(w)$; the sets  ${\rm S}(X)$ and ${\rm F}(X)$ are defined in a similar way.
A word $w\in A^*$, is {\it overlapping-free} if $wv\in A^*w$, with $|v|\le |w|-1$,   implies $v=\varepsilon$.

{\flushleft {\it Variable-length codes}}

We assume that the reader has a fundamental understanding  with the main concepts of the theory of variable-length codes: we  suggest, if necessary,  that  he (she) refers to \cite{BPR10}.
Given a subset $X$ of $A$, and $w \in X^*$, let $x_1, \cdots, x_n\in X$ such that $w$ is the result of the concatenation of the words $x_1, x_2,\cdots, x_n$, in this order. 
In view of specifying the factorization of $w$ over X, we use the notation $w =(x_1)(x_2)\cdots (x_n)$, or equivalently: $w =x_1\cdot x_2\cdots x_n$. 
For instance, over the set $X=\{a, ab, ba\}$, the word $bab\in X^*$ can be factorized as  $(ba)(b)$ or $(b)(ab)$ (equivalently denoted by $ba\cdot b$ or  $b\cdot ab$).

A set $X$ is a {\it variable-length code} (a {\it code} for short) if  for any pair of finite sequences of words in $X$, say  $(x_i)_{1\le i\le n}$, $(y_j)_{1\le j\le p}$, the equation
$x_1\cdots x_n=y_1\cdots y_p$ implies  $n=p$, and $x_i=y_i$ for each integer $i\in [1,n]$ (equivalently, the submonoid $X^*$ is {\it free}).
In other words, every element of $X^*$ has a unique factorization over $X$.
A set  $X\ne\{\varepsilon\}$ is a {\it prefix} (resp., {\it suffix}) {\it code} if $x\in {\rm P}(y)$ (resp., $x\in {\rm S}(y)$) implies $x=y$,  for every pair of words $x,y\in X$; 
$X$ is a {\it bifix} code  if it is both a prefix code and a suffix one. A set $X\subseteq A^*$ is {\it uniform} if all its elements have a common length.
In the case where we have $X\neq\{ \varepsilon\}$, the uniform set $X$ is a bifix code.

Given a  regular set $X$,  the Sardinas and Patterson  algorithm allows  to decide whether or not $X$ is a  code.
Since it will be used several times through the paper,
it is convenient to shortly recall it. Actually, some ultimately periodic sequence of sets, namely $(U_n)_{n\ge 0}$, is computed, as indicated in the following:
\begin{eqnarray}
\label{SP}
U_0=X^{-1}X\setminus\{\varepsilon\}~~~{\rm and:}~~~
(\forall n\ge 0)~~~U_{n+1}=U_n^{-1}X\cup X^{-1}U_n.
\end{eqnarray}
The algorithm necessarily stops: this corresponds to either  $\varepsilon\in U_n$ or  $U_n=U_p$, for some pair of different integers $p<n$: 
 $X$ is  a code if, and only if,  the second condition holds. 

A positive {\it Bernoulli distribution} consists in some total mapping $\mu$ from the alphabet $A$ into ${\mathbb R}_+=\{x\in {\mathbb R}: x\ge 0\}$ (the set of the non-negative real numbers)  such that   $\sum_{a\in A}\mu(a)=1$;
that mapping  is extended into a unique  monoid homomorphism from  $A^*$ into $({\mathbb R}_+,\times)$, which is itself extended into a unique  positive measure  $\mu:2^{A^*}\longrightarrow {\mathbb R}_+$.
 In order to do this, for each word $w\in A^*$, we set  $\mu\left(\{w\}\right)=\mu(w)$; in addition given two disjoint subsets $X,Y$ of $A^*$, we set $\mu(X\cup Y)=\mu(X)+\mu(Y)$.
In the whole paper, we take for $\mu$  the  so-called {\it uniform} Bernoulli measure: it is determined by $\mu(a)=1/|A|$, for each $a\in A$.

The following results are classical: the first one is due to Sch\"utzenberger and the second provides some answer to a question actually stated in \cite{R77}.
\begin{theorem}{\rm \cite[Theorem 2.5.16]{BPR10}}
\label{classic}
Given a  regular code $X\subseteq A^*$,   the following properties are equivalent:

{\rm (i)} $X$ is complete;

{\rm (ii)} $X$ is a maximal  code;

{\rm (iii)} we have   $\mu(X)=1$.
\end{theorem}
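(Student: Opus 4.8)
\noindent The plan is to reduce each condition, for each quasi-metric, to the emptiness of an effectively constructible regular set or rational relation, or to the code property of an effectively constructible regular set. Throughout I will use that $X$ is regular, that images of regular sets under rational relations are effectively regular, that rational relations have decidable emptiness, that recognizable relations form an effective Boolean algebra, and that the intersection of a rational relation with a recognizable one is rational (hence has decidable emptiness). For the maximality condition \ref{3}, the Sch\"utzenberger characterization (Theorem \ref{classic}) will turn the question into the decidable test $\mu(X)=1$, once the appropriate equivalence with completeness is established. The conditions \ref{1}, \ref{2}, \ref{4} share the same relational flavour, while \ref{3} is handled uniformly by an embedding construction.

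For the prefix metric I would first build a finite transducer realising $\underline{{\cal P}_k}$: reading $w$ and $w'$, it outputs their longest common prefix, then commits to a mismatch and consumes the remaining suffix of $w$ while emitting a suffix of $w'$, accepting exactly when the total number of consumed-plus-emitted letters lies in $[1,k]$; this witnesses $d_{\rm P}(w,w')=|w|+|w'|-2|w\wedge w'|\in[1,k]$. Rationality of $\underline{{\cal P}_k}$ gives \ref{1}: the set $\underline{{\cal P}_k}(X)$ is effectively regular and one tests $\underline{{\cal P}_k}(X)\cap X=\emptyset$. For \ref{4}, $\widehat{{\cal P}_k}(X)={\cal P}_k(X)$ is the image of $X$ under a rational relation, hence regular, and the Sardinas--Patterson algorithm (\ref{SP}) decides whether it is a code. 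For \ref{2}, since $d_{\rm P}$ is a genuine metric, the triangle inequality together with an explicit midpoint construction (the point of the $x$--$y$ geodesic at prefix-distance $k$ from the farther endpoint) shows that \ref{2} is equivalent to $(X\times X)\cap\underline{{\cal P}_{2k}}=\emptyset$; as $X\times X$ is recognizable and $\underline{{\cal P}_{2k}}$ rational, this intersection is rational with decidable emptiness. For the factor metric, writing a longest common factor as $f$ with $w=ufv$, one has $d_{\rm F}(w,w')\le k$ iff $w'=u'fv'$ with $|u|+|v|+|u'|+|v'|\le k$, so ${\cal F}_k(X)=\bigcup u'(u^{-1}Xv^{-1})v'$ over the finitely many length-bounded $u,v,u',v'$; each quotient is regular and the union finite, hence ${\cal F}_k(X)$ is effectively regular and \ref{4} follows by Sardinas--Patterson. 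When $X$ is finite (part (iii)), each ball ${\cal F}_k(x)$ is finite (one checks $|x|-k\le|w'|\le|x|+k$) and listable, so \ref{1} reduces to computing $d_{\rm F}(x,y)$ over the finitely many pairs and \ref{2} to testing the finitely many intersections ${\cal F}_k(x)\cap{\cal F}_k(y)$; both are decidable. The obstruction to extending \ref{1}, \ref{2} to infinite regular $X$ is precisely that $\underline{{\cal F}_k}$ is not rational, since the required common factor has length growing with the inputs.

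The main obstacle is \ref{3}. The plan is to prove that, for each of ${\cal P}_k$, ${\cal F}_k$ and $\widehat\theta$, a regular $\underline{\tau_{d,k}}$-independent code is maximal in the family of $\underline{\tau_{d,k}}$-independent codes if, and only if, it is complete. One direction is immediate from Theorem \ref{classic}: a complete code is a maximal code, a fortiori maximal within any subfamily to which it belongs. The converse is the hard part: starting from a non-complete $\underline{\tau_{d,k}}$-independent regular code $X$, I would exhibit a strictly larger $\underline{\tau_{d,k}}$-independent code, so that $X$ is not maximal. For this I would pick an overlapping-free word $y\notin{\rm F}(X^*)$ (which exists since $X$ is not complete) and adjoin new codewords built around high powers of $y$, following an Ehrenfeucht--Rozenberg-style completion with explicit regularity-preserving formulas. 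The delicate verification, to be carried out uniformly for the three relations, is that the adjoined words keep the set a code, keep it regular, make it complete, and stay at distance strictly greater than $k$ from every codeword (for $\widehat\theta$: unrelated by $\theta$); controlling this distance is the metric-specific heart of the argument. Granting the equivalence, \ref{3} reduces to testing $\mu(X)=1$ together with $\underline{\tau_{d,k}}$-independence; completeness is decidable by Theorem \ref{classic}, the independence side is decidable for ${\cal P}_k$ and $\widehat\theta$ by the relational arguments above, and for ${\cal F}_k$ the independence requirement---open for general regular codes---is to be resolved within the completeness case, which is the remaining point to secure for part (ii).

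Finally, for an (anti-)automorphism $\theta$, which is length-preserving and maps regular sets to regular sets (for anti-automorphisms using closure of regular languages under reversal), \ref{4} is immediate: $\widehat\theta(X)=X\cup\theta(X)$ is regular and Sardinas--Patterson decides whether it is a code, while \ref{3} is covered by the embedding above. For \ref{1} and \ref{2} I would prove them equivalent: since $\tau_{d_\theta,1}(x)=\{x,\theta(x)\}$, both conditions fail exactly when some $z\in X$ satisfies $\theta(z)\in X$ and $\theta(z)\ne z$, that is, exactly when $X$ is not $\underline\theta$-independent. Decidability then follows by forming the regular set $Y=X\cap\theta^{-1}(X)$ and testing whether it contains a word moved by $\theta$; the set of $\theta$-fixed words is effectively regular when $\theta$ is an automorphism and context-free when $\theta$ is an anti-automorphism, so in either case the relevant intersection has decidable emptiness. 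This settles all the claims of the theorem.
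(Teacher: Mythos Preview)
Your proposal does not address the stated theorem. The statement you were asked to prove is Sch\"utzenberger's classical equivalence (Theorem~\ref{classic}): for a regular code $X$, completeness, maximality as a code, and $\mu(X)=1$ are equivalent. The paper does not prove this result; it simply quotes it from \cite[Theorem~2.5.16]{BPR10} and uses it as a black box. What you have written is instead a proof sketch of the paper's main decidability result (Theorem~\ref{decidable}). You even invoke Theorem~\ref{classic} repeatedly as a known ingredient (``the Sch\"utzenberger characterization (Theorem~\ref{classic}) will turn the question into the decidable test $\mu(X)=1$''), so your text cannot possibly be a proof of that very theorem: it is circular with respect to the target statement.

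Concretely, nothing in your write-up establishes any of the three implications among (i), (ii), (iii). A proof of Theorem~\ref{classic} would require showing, for instance, that a non-complete code can always be strictly extended to a larger code (so non-complete implies non-maximal; this is exactly what Theorem~\ref{EhRz} provides), that $\mu(X)=1$ forces maximality via the Kraft inequality (adjoining any new word would push the measure above $1$), and that a maximal regular code has measure $1$ (the delicate direction, via a density/thinness argument on $X^*$ as in \cite{BPR10}). Your discussion of transducers for $\underline{{\cal P}_k}$, quotients for ${\cal F}_k$, and the Ehrenfeucht--Rozenberg-style embedding is all aimed at Conds.~\ref{1}--\ref{4} of Theorem~\ref{decidable} and is simply off-target here.
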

\begin{theorem}{\rm \cite{ER85}}.
\label{EhRz}
Let $X\subseteq A^*$ be non-complete 
code, $z\notin {\rm F}(X^*)$ overlapping-free,  $U=A^*\setminus\left( X^*\cup A^*zA^*\right)$, and $Y=(zU)^*z$. Then $Z=X\cup Y$ is a   
complete code.
\end{theorem}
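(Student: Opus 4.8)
The goal is to prove two things: that $Z=X\cup Y$ is a code, and that it is complete. The plan rests on one structural observation. Since $z\notin{\rm F}(X^*)$, no word of $X^*$ contains $z$, so $X^*\cap A^*zA^*=\emptyset$ and $A^*$ splits as the disjoint union $X^*\sqcup A^*zA^*\sqcup U$; in particular the $z$-free words (those in $A^*\setminus A^*zA^*$) are exactly $X^*\sqcup U$, so every $z$-free word has a well-defined \emph{type}, either $X^*$ or $U$, and the two types are disjoint. Moreover, because $z$ is overlapping-free, its occurrences in any word are pairwise non-overlapping; hence each word $v$ admits a unique factorization $v=t_0zt_1z\cdots zt_l$ with the $t_i$ all $z$-free (the displayed $z$'s being exactly the occurrences of $z$ in $v$). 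I will call this the $z$-block decomposition of $v$, and I will use the type of each block throughout.

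For completeness I would show that $w\in{\rm F}(Z^*)$ for every $w\in A^*$, which reduces to proving $zwz\in Z^*$. Writing the $z$-block decomposition $zwz=zs_1zs_2\cdots zs_rz$ (the leading and trailing letters force the outer blocks to be empty), each interior block $s_i$ is $z$-free, hence of type $U$ or $X^*$. I would then induct on the number of $X^*$-type blocks. If there are none, then $zwz=zu_1z\cdots zu_rz\in(zU)^rz\subseteq Y$ (and $zwz=z\cdot z$ when $r=0$). Otherwise pick an $X^*$-type block $s_j$ and split $zwz=(zs_1z\cdots zs_{j-1}z)\,s_j\,(zs_{j+1}z\cdots zs_rz)$; here $s_j\in X^*\subseteq Z^*$, while the two bracketed factors are again of the same bracketed-by-$z$ shape with strictly fewer $X^*$-type blocks, so the induction hypothesis places them in $Z^*$. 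Hence $zwz\in Z^*$ and $w$ is a factor of a word of $Z^*$.

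The code property is the main obstacle. Suppose $w=p_1\cdots p_m=q_1\cdots q_n$ with all $p_i,q_j\in X\cup Y$. The key lemma, and the place where the two hypotheses on $z$ are essential, is that no occurrence of $z$ in $w$ can straddle a factor boundary. A boundary between two $X$-factors is ruled out by $z\notin{\rm F}(X^*)$: a straddling $z$ there would be a factor of the product of those two $X$-codewords, hence of a word of $X^*$. A boundary involving a $Y$-factor is ruled out by overlapping-freeness: that $Y$-factor contributes a genuine occurrence of $z$ at the boundary (its first or last letters), and any straddling occurrence would overlap it. Consequently every occurrence of $z$ lies entirely inside a single $Y$-factor, so the $z$-block decomposition of $w$ is common to both factorizations, and the type of each block is an invariant of $w$.

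It then remains to read the factorization off from the $z$-block decomposition. The $X^*$-type blocks, including the empty ones (since $\varepsilon\in X^*$), act as separators: an empty $X^*$-block is exactly a $zz$, marking a junction between two consecutive $Y$-factors. Between two consecutive separators all blocks are of type $U$, hence nonempty (as $\varepsilon\notin U$), and the corresponding run $zu_1z\cdots zu_kz$ is forced to be a single $Y$-codeword---it cannot be split, since an $X$-codeword is impossible there (its $z$-free blocks are of type $U\neq X^*$) and a further split into $Y$-codewords would require an internal empty $X^*$-block, i.e. an internal $zz$, which the nonempty $U$-blocks preclude. Thus both factorizations produce the same $Y$-codewords and the same maximal $X^*$-blocks, and inside each $X^*$-block the factorization over $X$ is unique because $X$ is a code; the two factorizations therefore coincide, so $Z$ is a code. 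I expect the delicate points to be the straddling lemma and this rigidity of individual $Y$-codewords, and it is precisely $z\notin{\rm F}(X^*)$ (disjointness of the $U$ and $X^*$ types, together with the $z$-freeness of $X^*$) and the overlapping-freeness of $z$ that make them go through.
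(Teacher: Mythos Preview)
The paper does not contain a proof of Theorem~\ref{EhRz}; it is quoted from \cite{ER85} as a preliminary result and used as a black box in the embedding constructions of Sections~\ref{Prefix-metric}--\ref{antiautomorphism}. There is therefore no ``paper's own proof'' to compare your argument against.

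That said, your argument is the classical one and is essentially correct. Two small points deserve a touch more care. First, in the completeness part the case $r=0$ cannot occur: $zwz$ always carries at least two (non-overlapping) occurrences of $z$, so $r\ge 1$; when $w=\varepsilon$ you get $r=1$ with $s_1=\varepsilon\in X^*$, and it is the inductive step (not the base case) that disposes of it via $zz=z\cdot\varepsilon\cdot z$. Second, in the straddling lemma a single occurrence of $z$ may span more than two consecutive $Z$-factors; the clean formulation is that every factor strictly inside that span has length $<|z|$ and hence lies in $X$ (since each word of $Y$ has length $\ge|z|$), so either all spanned factors are in $X$, giving $z\in{\rm F}(X^*)$, or one of the two extremal factors lies in $Y$, and then your overlap argument with the boundary copy of $z$ applies. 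With these adjustments your reconstruction of the factorization from the type sequence of the $z$-blocks goes through as written.
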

Clearly, if $X$ is  a regular set then the same holds for the resulting set $Z$.

\bigskip
\noindent
{\it Regular relations, recognizable relations}

We also assume  the reader to be familiar  with the  theory of regular relations and automata:  if necessary,  we suggest that  he (she) refers to \cite{E74} or
\cite[Ch. IV]
{S03}. 

Given two monoids, say  $M,N$, 
a binary relation from $M$ into $N$ consists in any subset $\tau\subseteq M\times N$. 
For $(w,w')\in\tau$, we also set $w'\in\tau(w)$, and we  set $\tau(X)=\{\tau(x): x\in X\}$. 
The composition in this order of  $\tau$ by $\tau'$ is defined by  $\tau\cdot\tau'(x)=
\tau'\left(\tau(x)\right)$ (the notation  $\tau^k$ refers to that operation);
 $\tau^{-1}$, the  {\it inverse} of $\tau$, is defined by  $(w,w')\in\tau^{-1}$ whenever $(w',w)\in\tau$. We denote by $\overline\tau$ the {\it complement} of $\tau$, i.e.  $(M\times N)\setminus \tau$.

A family of subsets of $M$, say ${\cal F}\subseteq 2^M$, is {\it  regularly closed} if, and only if, the sets $X\cup Y$, $XY$, and  $X^*$ belong to ${\cal F}$, whenever we have $X,Y\in {\cal F}$.
Given a family ${\cal F}\subseteq 2^M$, its {\it regular closure} is the smallest (wrt. the inclusion) subset of $2^M$ that contains ${\cal F}$ and which is regularly closed.
A  binary relation $\tau\subseteq M\times N$ is {\it regular} (or equivalently, {\it rational}) 
if, and only if,  it belongs to the regular closure of the finite subsets of $M\times N$.
Equivalently there is some finite  $M\times N$-{\it automaton} (or equivalently, {\it transducer}),
say ${\cal R}$,  with {\it behavior}  $\left|{\cal R}\right|=\tau$
 \cite{EM65,S03}. 
The family of regular relations is closed under  inverse and composition. 

The so-called  {\it recognizable}  relations constitute a noticeable subfamily in regular relations:
a  subset $R\subseteq M\times N$ is {\it  recognizable} if, and only if, 
we have $R=R\cdot\phi\cdot\phi^{-1}$,
 for some monoid homomorphism $\phi:M\times N \longrightarrow P$, with $P$ finite.
Equivalently, finite families of recognizable subsets of $M$ and $N$, namely $(T_i)_{\in I}$ and $(U_i)_{i\in I}$, exist such that $R=\bigcup
_{i\in I} T_i\times U_i$ \cite[Corollary II.2.20]{S03}.

In the paper, we focus on $M=N=A^*$.
With this condition, recognizable relations 
 are closed under composition, complement and intersection,  their intersection with a regular relation being itself  regular
\cite[Sect. IV.1.4]{S03}.
 According to \cite[Theorem IV.1.3]{S03}, given a regular relation $\tau\subseteq A^*\times A^*$,
and a regular (equivalently, recognizable) set $X\subseteq A^*$, the sets
$\tau(X)$ and $\tau^{-1}(X)$ are regular.
If $X$, $Y$ are recognizable  subsets of $A^*$, the same holds for  $X\times Y$.
At last the relation $id_{A^*}=\{(w,w)|w\in A^*\}$ and its complement  $\overline {id_{A^*}}$, are regular but non-recognizable. 
\section{Error detection and the  prefix metric}
\label{Prefix-metric}
Given $w,w'\in A^*$, a unique pair of words $u,u'$ exist such that  $w=(w\wedge w')u$ and $w'=(w\wedge w')u'$: 
by definition, we have $d_{\rm P}(w,w')=|u|+|u'|$ (see Figure 1).
\begin{figure}
\begin{center}
\label{Figure0}
\includegraphics[width=8.5cm,height=4.5cm]{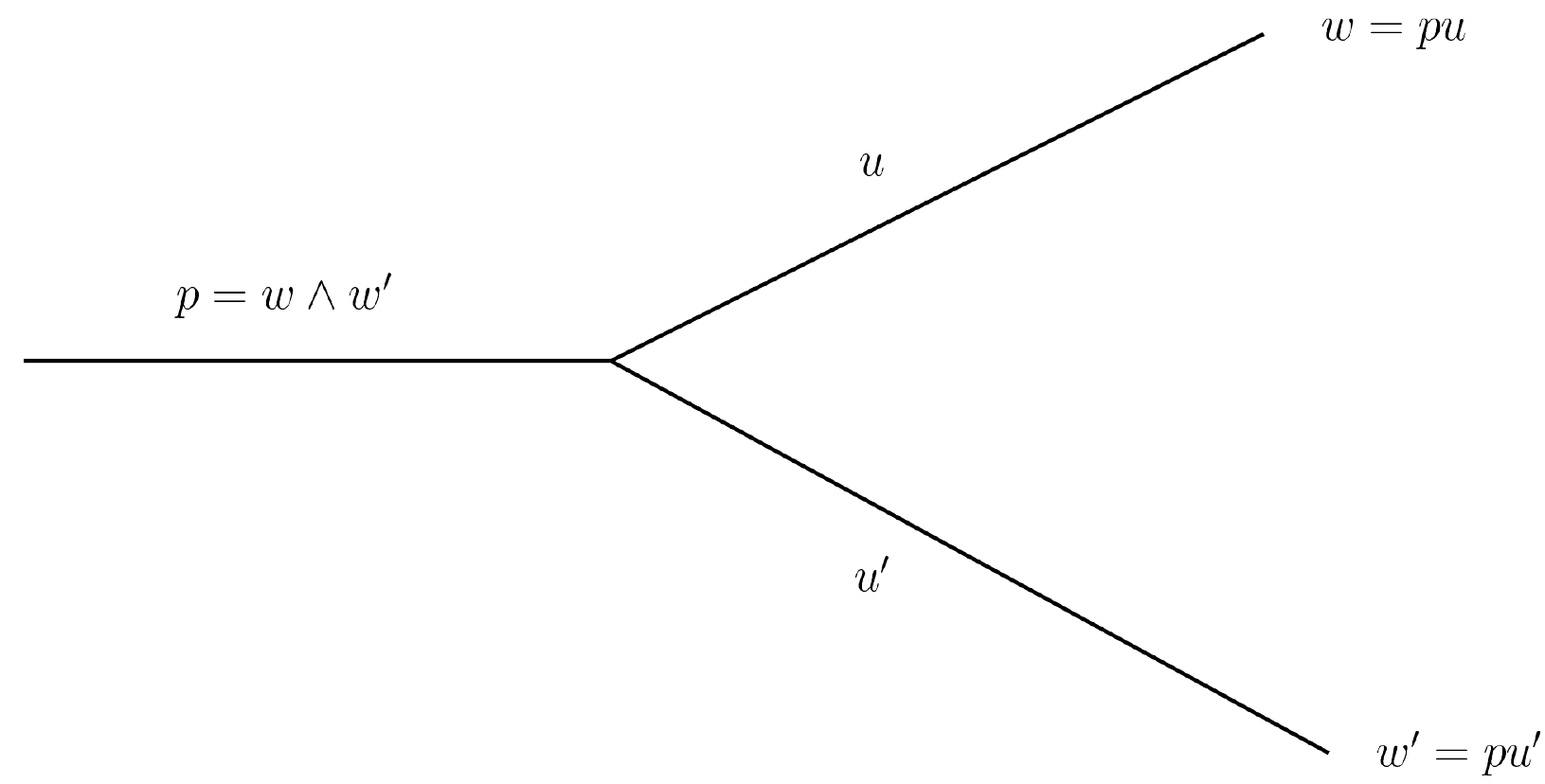}
\end{center}
\caption[]{We have $(w,w')\in{\cal P}_k$ iff.  $|u|+|u'|\le k$; similarly,
$(w,w')\in\underline{{\cal P}_k}$ is equivalent to $1\le |u|+|u'|\le k$.}
\end{figure}
In addition ${\cal P}_k$ is reflexive and symmetric that is, the equality ${\cal P}_k^{-1}={\cal P}_k$ holds, and we have ${\cal P}_{k}\subseteq {\cal P}_{k+1}$.
Below, we provide  some example:
\begin{example}
\label{Prefs}
 Let  $A=\{a,b\}$. 
(1) The finite prefix  code  $X=\{a,ba,b^2\}$ satisfies Cond. \ref{1} wrt.  ${\rm \cal P}_1$ (in other words $X$ is $1$-error-detecting).
Indeed, it follows from $d_{\rm P}(a,ba)=d_{\rm P}(a,bb)=3$ and $d_{\rm P}(ba,bb)=2$ that we have $(x,y)\notin {\cal P}_1$, for each pair of different words $x,y\in X$, that is  $\underline{{\rm \cal P}_1}(X)\cap X=\emptyset$.
 Cond. \ref{2} is not satisfied by $X$. Indeed, we have:

 ${\cal P}_1^{-1}{\cal P}_1\left(ba\right)\cap X=
\{ba,b,ba^2,bab,\varepsilon, b^2, ba^3,ba^2b,baba,bab^2\}\cap X=\{ba,b^2\}$.
Classically, $X$ is a maximal code, therefore, since it is $\underline{{\cal P}_1}$-independent,  $X$ is maximal   in the family of ${\underline{\cal P}_1}$-independent codes that is, $X$ satisfies Cond. \ref{3}.
Consequently, since we have $X\subsetneq {\cal P}_1(X)$,  the code $X$ cannot satisfy Cond. \ref{4} (we verify that we have $\varepsilon\in{\cal P}_1(X)$).

\smallskip
(2) For $n\ge k+1$, the  complete uniform code $Y=A^n$ satisfies Conds. \ref{1}, \ref{3}.
Since $Y$ is a maximal code, it  cannot satisfies Cond. \ref{4} (we have $Y\subsetneq {\cal P}_1(Y)$).
 Cond. \ref{2} is no more satisfied by $Y$:
indeed, given two different characters  $a,b$, we have ${\cal P}_k(a^n)\neq \emptyset$ and $a^{n-1}b\in{\cal P}_k^{-1}\left({\cal P}_k\left(a^n\right)\right)\cap Y$
 thus ${\cal P}_k^{-1}\left({\cal P}_k\left(a^n\right)\right)\cap Y\neq \{a^n\}$.

\smallskip
(3) The  regular bifix code $Z=\{ab^na: n\ge 0\}\cup \{ba^nb: n\ge 0\}$ satisfies Cond. \ref{1} wrt. ${\cal P}_1$. 
Indeed, we have:

 ${\underline{\cal P}_1}(Z)=\bigcup_{n\ge 0}\{ab^n,ab^na^2,ab^nab, ba^n, ba^nba,  ba^nb^2\}$,
 thus ${\underline{\cal P}_1}(Z)\cap Z=\emptyset$.\\
 For $n\ne 0$ we have 
 ${\cal P}_1^{-1}{\cal P}_1\left(ab^na\right)=\{ab^n, ab^{n-1}, ab^na,  ab^{n+1}, ab^na^2, ab^na^3,ab^nab,\\ab^na^2b,ab^naba,ab^nab^2
\}$, moreover we have ${\cal P}_1^{-1}{\cal P}_1\left(a^2\right)= {\cal P}_1\left(\{a,a^3,a^2b\}\right)$, thus 
${\cal P}_1^{-1}{\cal P}_1\left(a^2\right)=\{a,  \varepsilon, a^2, ab, a^3, a^4, a^3b,   a^2b,a^2ba,a^2b^2\}$: 
in any case we obtain $Z\cap {\cal P}_1^{-1}{\cal P}_1\left(ab^na\right)=\{ab^na\}$.
 Similarly,  we have $Z\cap {\cal P}_1^{-1}{\cal P}_1\left(ba^nb\right)=\{ba^nb\}$, hence $Z$ satisfies Cond. \ref{2}. At last, we have $\mu(X)=2\cdot1/4\sum_{n\ge 0}(1/2)^n=1$ therefore,
according to Theorem \ref{classic}, $Z$ is a maximal code, whence it is maximal in the family of ${\cal P}_1$-independent codes (Cond. \ref{3}).
Since we have $Z\subsetneq {\cal P}_1(Z)$,  $Z$ cannot satisfies Cond. \ref{4} (we verify that have $a,a^2\in {\cal P}_1(Z)\subseteq \widehat{{\cal P}_1}(Z)$).
\end{example}
In the sequel, we will prove that,  given a regular code $X$, one can decide whether any of Conds. \ref{1}-\ref{4} holds.
Beforehand we establish the following property which, regarding Cond. \ref{1}, plays a prominent part:
\begin{proposition}
\label{Pi-regular}
For every $k\ge 1$, both the relations ${\cal P}_k$ and ${\underline{\cal P}_k}$ are regular.
\end{proposition}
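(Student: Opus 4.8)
The plan is to construct explicit finite transducers realizing $\mathcal{P}_k$ and $\underline{\mathcal{P}_k}$, exploiting the fact that both relations have a very rigid prefix structure that makes them ``local'' in the sense required for a finite-state device. The crucial observation from the definition is that $(w,w')\in\mathcal{P}_k$ precisely when, writing $w=(w\wedge w')u$ and $w'=(w\wedge w')u'$, we have $|u|+|u'|\le k$. So the behaviour of the transducer splits into two phases: first it reads a common prefix $p=w\wedge w'$, copying each input symbol to the output unchanged; then it nondeterministically decides that the common prefix has ended and must thereafter consume the remaining suffix $u$ of the input while producing the remaining suffix $u'$ of the output, subject only to the budget constraint $|u|+|u'|\le k$.

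First I would build a transducer $\mathcal{R}$ for $\mathcal{P}_k$ as follows. It has one ``copying'' state $q_0$ carrying a self-loop labelled $(a,a)$ for every $a\in A$ (reading and writing the same letter, thereby scanning the common prefix). From $q_0$ there is an $(\varepsilon,\varepsilon)$-transition into a second component that manages the two tails: a finite collection of states indexed by the pair $(i,j)$ with $i+j\le k$ recording how many input-tail and output-tail symbols remain to be consumed and emitted. Transitions of the form $(a,\varepsilon)$ decrement the input budget, transitions $(\varepsilon,b)$ decrement the output budget, and every state with $i+j\le k$ is terminal. The finiteness is immediate since there are only $O(k^2)$ budget states and $|A|$ loop labels. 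The one subtlety to handle carefully is correctness of the ``common prefix'' guarantee: because $q_0$ only emits matching letters, whatever is copied genuinely lies in $w\wedge w'$; and because after leaving $q_0$ the device is free to read any input letter and write any output letter, the first post-prefix letters $u_1,u'_1$ may coincide, so strictly speaking the prefix scanned by $q_0$ is merely \emph{a} common prefix, not necessarily the longest one. This does not matter for defining the relation: if $|u|+|u'|\le k$ for the genuine longest common prefix, then a fortiori the same bound holds for any shorter common prefix, so the transducer accepts exactly the pairs with $d_{\mathrm P}(w,w')\le k$. Hence $|\mathcal{R}|=\mathcal{P}_k$, and $\mathcal{P}_k$ is regular.

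For $\underline{\mathcal{P}_k}$, the anti-reflexive restriction, the reading of Figure~1 tells us the condition becomes $1\le|u|+|u'|\le k$; that is, we must additionally forbid $u=u'=\varepsilon$, which is exactly the case $w=w'$. The cleanest route is to exclude the single terminal state indexed $(0,0)$ from the accepting set in the second component, thereby requiring at least one tail symbol to be consumed or emitted. This yields a finite transducer whose behaviour is $\{(w,w'):1\le d_{\mathrm P}(w,w')\le k\}=\underline{\mathcal{P}_k}$, so $\underline{\mathcal{P}_k}$ is regular as well. Alternatively, one may argue purely on the relational algebra level: $\mathcal{P}_k$ is regular by the construction above, $id_{A^*}$ is regular, and regular relations over $A^*\times A^*$ are closed under the set difference $\mathcal{P}_k\setminus id_{A^*}$ whenever the subtracted relation is recognizable or when closure under difference is available in this setting; however, since $id_{A^*}$ is not recognizable, I would prefer the direct transducer modification rather than invoking a difference operation.

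The main obstacle I anticipate is not the regularity of $\mathcal{P}_k$, which is essentially combinatorial bookkeeping, but rather pinning down the anti-reflexive restriction rigorously, because $\overline{id_{A^*}}$ is regular but \emph{non}-recognizable (as noted in the preliminaries), so one cannot simply intersect $\mathcal{P}_k$ with $\overline{id_{A^*}}$ and invoke a general closure theorem about regular-meets-recognizable. The honest fix is the transducer-level one: design the accepting condition so that the empty-tail run is the unique path rejected, which is possible precisely because $d_{\mathrm P}(w,w')=0$ forces $w=w'$ and corresponds to the single $(0,0)$ configuration. Once this is arranged, both claims follow from a single explicit construction, and I would present the transducer for $\mathcal{P}_k$ in detail and then describe $\underline{\mathcal{P}_k}$ as the same device with the $(0,0)$-state de-accepted.
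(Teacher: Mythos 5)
Your transducer for $\mathcal{P}_k$ is essentially sound, although the justification of the "subtlety" is stated backwards: a \emph{shorter} common prefix yields \emph{longer} tails, so the bound $|u|+|u'|\le k$ does not pass from the longest common prefix to shorter ones. What actually saves the construction is existential acceptance: an accepting run exists iff \emph{some} common-prefix decomposition has tail sum $\le k$, which happens iff the optimal decomposition (the longest common prefix) does. The genuine gap is in the passage to $\underline{\mathcal{P}_k}$. De-accepting the $(0,0)$ budget state does not exclude the diagonal, precisely because the split point between the copying phase and the budget phase is guessed nondeterministically: for $w=w'=pa$ and $k\ge 2$, there is a run that copies only $p$ in $q_0$, then consumes the final $a$ of the input via an $(a,\varepsilon)$ transition and emits the final $a$ of the output via an $(\varepsilon,a)$ transition, ending in an accepting budget state with $i+j=2$, not in $(0,0)$. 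Hence your modified device still accepts every pair $(w,w)$ with $|w|\ge 1$ whenever $k\ge 2$, and its behaviour is $\mathcal{P}_k$ itself rather than $\underline{\mathcal{P}_k}$. The sentence claiming that $w=w'$ ``corresponds to the single $(0,0)$ configuration'' is exactly where nondeterminism bites.

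The missing idea --- and the one the paper uses --- is to force the guessed split point to be the longest common prefix by constraining the \emph{transitions} of the tail component, not its accepting set: the tail component should realize only the pairs $(u,u')$ of non-empty words whose initial characters \emph{differ} (with $|u|+|u'|\le k$), together with the pairs $(u,\varepsilon)$ and $(\varepsilon,u)$ with $1\le|u|\le k$. Under that restriction the prefix copied in $q_0$ is provably $w\wedge w'$, so the tail lengths sum to exactly $d_{\rm P}(w,w')$, and the condition $1\le|u|+|u'|\le k$ carves out precisely $\underline{\mathcal{P}_k}$; one then recovers $\mathcal{P}_k=\underline{\mathcal{P}_k}\cup id_{A^*}$ as a union of regular relations. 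Your instinct that one cannot simply intersect $\mathcal{P}_k$ with the non-recognizable relation $\overline{id_{A^*}}$ was correct, but the repair has to happen at the level of the allowed tail pairs rather than the accepting states.
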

\begin{proof}
In what follows we indicate the construction of a finite automaton with behavior ${\underline{\cal P}_k}$
(see Figure 2). This construction is based on the different underlying configurations in Figure 1.
Firstly, we denote by $E$ the finite set of all the pairs of non-empty words $(u,u')$, with different initial characters, and such that
$|u|+|u'|\le k$.
In addition, $F$ (resp., $G$) stands for the set of all the pairs $(u,\varepsilon)$ (resp., $(\varepsilon,u)$), with $1\le |u|\le k$.
Secondly, we construct a  finite tree-like $A^*\times A^*$-automaton with behavior $E\cup F\cup G$. Let $0$ be the  initial state, the other states being terminal.
We complete the construction by adding the transitions $\left(0,(a,a),0\right)$, for all $a\in A$. Let ${\cal R}_{{\rm P},k}$ be the resulting automaton.\\
By construction we have $\left|{\cal R}_{{\rm P},k}\right|=id_{A^*}\left(E\cup F\cup G\right)$. More precisely,
$\left|{\cal R}_{{\rm P},k}\right|$ is the set of all the pairs $(w,w')$
such that there are  $v,u,u'\in A^*$  satisfying each of  the three following conditions:

(1) $w=pu$, $w'=pu'$; 

(2) if both the words $u,u'$ are non-empty, their initial characters are different; 

(3) $1\le |u|+|u'|\le k$.

\smallskip
\noindent
In other words, $\left|{\cal R}_{{\rm P},k}\right|$ is the sets of   all the pairs $(w,w')$ such that
$1\le d_{\rm P}(w,w')=|u|+|u'|\le k$,
therefore we have $\underline{{\cal P}_k}=\left|{\cal R}_{{\rm P},k}\right|$. Consequently  $\underline{{\cal P}_k}$ and ${\cal P}_k=\underline{{\cal P}_k}\cup id_{A^*}$ are regular relations.
~~~~~~~~~~~~~~~~~~~~~~~~~~~~~~~~~~~~~~~~~~~~~~~~~~~~~~~~~~~~~~~~~~~~~~~~~~~
$\Box$
\end{proof}
\begin{figure}
\begin{center}
\label{Figure1}
\includegraphics[width=5cm,height=5cm]{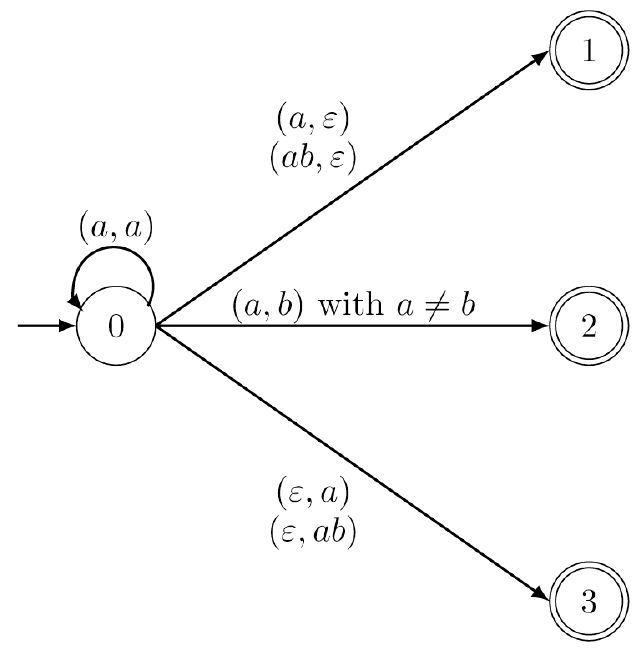}
\end{center}
\caption[]{\small The case where we have $k=2$: in the automaton ${\cal R}_{\rm P, k}$, the   arrows are muti-labelled ($a$, $b$ stand for all pairs of characters in $A$) and terminal states are represented with double circles.}
\end{figure}
\begin{remark}
In \cite{Ng16} the author introduces a peculiar $(A^*\times A^*)\times {\mathbb N}$-automaton: in view of this,  
 for every $(w,w')\in A^*\times A^*$, 
the distance $d_{\rm P}(w,w')$ is the least $d\in{\mathbb N}$ for which $\left((w,w'),d\right)$ is the label of  some successful path.
Furthermore, some alternative proof of the regularity of ${\cal P}_k$ can be obtained.
 However,  we note that such a construction cannot involve the relation $\underline{{\cal P}_k}$ itself that is, it does not affect  Cond. \ref{1}.
\end{remark}
\noindent
The following property is also used in the proof of Proposition  \ref{decid-Pi1}:
\begin{lemma}
\label{Pi2}
Given a positive integer $k$ we have ${\rm \cal P}_{2k}={\rm \cal P}_k^2$.
\end{lemma}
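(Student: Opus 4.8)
The goal is to prove the identity ${\cal P}_{2k}={\cal P}_k^2$, where ${\cal P}_k^2$ denotes the composition of the relation ${\cal P}_k$ with itself. Since ${\cal P}_k$ is reflexive and symmetric, composition is the natural operation to examine, and the claim asserts that the prefix-distance relation behaves additively under composition with a factor of two in the radius. The plan is to prove the two inclusions separately, working directly from the definition $d_{\rm P}(w,w')=|w|+|w'|-2|w\wedge w'|$ and the geometric picture of Figure 1, in which $w=pu$ and $w'=pu'$ with $p=w\wedge w'$ and $d_{\rm P}(w,w')=|u|+|u'|$.

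For the inclusion ${\cal P}_k^2\subseteq {\cal P}_{2k}$, I would invoke the triangle inequality for the quasi-metric $d_{\rm P}$. If $(w,w')\in{\cal P}_k^2$, there is an intermediate word $z$ with $(w,z)\in{\cal P}_k$ and $(z,w')\in{\cal P}_k$, so $d_{\rm P}(w,z)\le k$ and $d_{\rm P}(z,w')\le k$; the triangle inequality then yields $d_{\rm P}(w,w')\le d_{\rm P}(w,z)+d_{\rm P}(z,w')\le 2k$, that is $(w,w')\in{\cal P}_{2k}$. One should verify that $d_{\rm P}$ does satisfy the triangle inequality, but this is a standard and easy property of the prefix metric; alternatively it can be checked by hand using the common-prefix description.

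The nontrivial inclusion is ${\cal P}_{2k}\subseteq {\cal P}_k^2$, and this is where I expect the main work to lie. Given $(w,w')\in{\cal P}_{2k}$, write $w=pu$ and $w'=pu'$ with $p=w\wedge w'$, so $|u|+|u'|=d_{\rm P}(w,w')\le 2k$. The idea is to exhibit an explicit midpoint $z$ that splits the distance into two halves each of size at most $k$. Intuitively, walking from $w$ up to the common prefix $p$ and then down to $w'$ costs $|u|+|u'|\le 2k$ steps in total, so one can stop the walk at a point $z$ lying within distance $k$ of both endpoints. Concretely, I would choose $z$ to be an appropriate truncation: if $|u|\le k$, one can first move from $w$ to $p$ (distance $|u|\le k$ is not guaranteed, so some care with cases is needed) and then set an intermediate point along the $w'$ side; more robustly, take $z$ to be the prefix of $w$ (or of $w'$) of a length chosen so that both $d_{\rm P}(w,z)\le k$ and $d_{\rm P}(z,w')\le k$ hold.

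The delicate step is the case analysis in selecting $z$: depending on whether the larger of $|u|,|u'|$ exceeds $k$, the midpoint must be placed either strictly inside $u$ (or $u'$) or exactly at $p$. If, say, $|u|\ge |u'|$, then $|u|\ge |u'|$ and $|u|\le 2k$; choosing $z$ to be the prefix of $w$ of length $|p|+(|u|-k)$ when $|u|>k$ gives $d_{\rm P}(w,z)=k$ and a short computation shows $d_{\rm P}(z,w')=|u|-k+|u'|\le k$ using $|u|+|u'|\le 2k$. When $|u|\le k$ one may simply take $z=w'$ (or $z=p$), since then both distances are automatically at most $k$. I would lay out these cases cleanly, verify each distance computation against the common-prefix formula, and conclude that in every case the constructed $z$ witnesses $(w,w')\in{\cal P}_k^2$. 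Combining the two inclusions gives the asserted equality ${\cal P}_{2k}={\cal P}_k^2$.
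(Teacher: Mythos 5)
Your proposal is correct and follows essentially the same route as the paper: the easy inclusion via the triangle inequality for $d_{\rm P}$, and the converse by placing a midpoint at the prefix of $w$ of length $|p|+(|u|-k)$ when $\max(|u|,|u'|)>k$, and at $p=w\wedge w'$ otherwise. (Only your parenthetical alternative $z=w'$ in the second case is wrong, since $d_{\rm P}(w,w')$ may still exceed $k$ there; the choice $z=p$ that you also give is the right one and is what the paper uses.)
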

\begin{proof}
In order to prove that ${\rm \cal P}_k^2\subseteq{\rm \cal P}_{2k}$,
we consider a pair of words $(w,w')\in {\rm \cal P}_k^2$. 
By definition, some word $w''\in A^*$ exists such that we have  $(w,w''),(w'',w')\in {\rm \cal P}_k$ that is, 
  $d_{\rm P}(w,w'')\le k$, $d_{\rm P}(w'',w')\le k$. This implies 
$d_{\rm P}(w,w')\le  d_{\rm P}(w,w'')+d_{\rm P}(w'',w')\le 2k$ that is, $ (w,w')\in {\rm \cal P}_{2k}$.\\
Conversely, let  $(w,w')\in {\rm \cal P}_{2k}$, and let $p=w\wedge w'$.
Regarding the integers  $|w|-|p|$, $|w'|-|p|$ , exactly one of the two following conditions occurs:

\smallskip
(a) Firstly, at least one of the integers $|w|-|p|$, $|w'|-|p|$ belongs to $[k+1,2k]$.
Since ${\rm \cal P}_{2k}$ is a symmetric relation, without loss of generality, we assume $k+1\le |w|-|p|\le 2k$. With this condition
a non-empty word $v$ exists such that $w=pvA^k$ that is, $pv\in{\cal P}_k(w)$  with $|pv|=|w|-k$.
On the other hand,
we have $p=pv\wedge w'$, thus $d_{\rm P}(pv,w')=
(|w|-k)+|w'|-2|p|=d_{\rm P}(w,w')-k$. It follows from $d_{\rm P}(w,w')\le 2k$
that $d_{\rm P}(pv,w')\le k$, thus $w'\in{\rm \cal P}_k(pv)$: this implies  $w'\in{\cal P}_k\left({\cal P}_k(w)\right)$, thus $(w,w')\in{\cal P}^2$.

\smallskip
(b) Secondly, in the case where we have   $|w|-|p|\le k$ and $|w'|-|p|\le k$, by definition we have  
$p\in{\cal P}_k(w)$, $w'\in{\cal P}_k(w)$, thus
$(w,w')\in {\rm \cal P}_k^2$.%
~~~~~~~~~~~~~~~~~~~~~~~~~~~~~~~~~~
$\Box$
\end{proof}
We are now ready to establish the following result:
\begin{proposition}
\label{decid-Pi1}
Given  a regular code $X\subseteq A^*$,  wrt. ${\rm\cal P}_k$,
it can be decided whether $X$ satisfies any of Conds. \ref{1}, \ref{2}, and  \ref{4}.
\end{proposition}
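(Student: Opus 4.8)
The plan is to treat the three conditions \ref{1}, \ref{2}, \ref{4} separately, exploiting the machinery built in Proposition \ref{Pi-regular} and Lemma \ref{Pi2} together with the closure properties of regular and recognizable relations recalled in the preliminaries.

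\medskip
For Cond. \ref{1}, I would argue as follows. By definition, $X$ satisfies Cond. \ref{1} exactly when $\underline{{\cal P}_k}(X)\cap X=\emptyset$. Since $X$ is regular and $\underline{{\cal P}_k}$ is a regular relation (Proposition \ref{Pi-regular}), the set $\underline{{\cal P}_k}(X)$ is regular by \cite[Theorem IV.1.3]{S03}. The intersection $\underline{{\cal P}_k}(X)\cap X$ is then a regular subset of $A^*$, and emptiness of a regular set is decidable (one tests whether the minimal or trim automaton accepts any word). Hence Cond. \ref{1} is decidable.

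\medskip
For Cond. \ref{2}, the key reduction is to show that $X$ satisfies Cond. \ref{2} if, and only if, $(X\times X)\cap\underline{{\cal P}_{2k}}=\emptyset$. To see the reformulation, note that $\tau_{d,k}(x)\cap\tau_{d,k}(y)\neq\emptyset$ for $x,y\in X$ means there is a common output word $w''$ with $d_{\rm P}(x,w'')\le k$ and $d_{\rm P}(w'',y)\le k$, i.e. $(x,y)\in{\cal P}_k^2$; by Lemma \ref{Pi2} this is equivalent to $(x,y)\in{\cal P}_{2k}$. Thus Cond. \ref{2}, which demands $x=y$ whenever such a $w''$ exists, fails precisely when some pair of \emph{distinct} codewords $x,y\in X$ lies in ${\cal P}_{2k}$, that is in $\underline{{\cal P}_{2k}}$. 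So Cond. \ref{2} holds iff $(X\times X)\cap\underline{{\cal P}_{2k}}=\emptyset$. Now $X\times X$ is recognizable (the product of two recognizable sets), and by Proposition \ref{Pi-regular} the relation $\underline{{\cal P}_{2k}}$ is regular; since the intersection of a recognizable relation with a regular one is regular \cite[Sect.~IV.1.4]{S03}, the relation $(X\times X)\cap\underline{{\cal P}_{2k}}$ is regular, and emptiness of a regular relation is decidable. Hence Cond. \ref{2} is decidable.

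\medskip
For Cond. \ref{4}, recall from the discussion in the introduction that, since ${\cal P}_k$ is reflexive, Cond. \ref{4} is equivalent to ${\cal P}_k(X)$ being a code. Because ${\cal P}_k$ is a regular relation and $X$ is regular, the set ${\cal P}_k(X)$ is regular, again by \cite[Theorem IV.1.3]{S03}. One then applies the Sardinas and Patterson algorithm, recalled via the recurrence \eqref{SP}, to the regular set ${\cal P}_k(X)$ to decide whether it is a code; the algorithm terminates on any regular set. Hence Cond. \ref{4} is decidable. The step I expect to require the most care is the equivalence underlying Cond. \ref{2}: one must verify that the quantifier over the intermediate output word $w''$ collapses, via Lemma \ref{Pi2}, to the single relation $\underline{{\cal P}_{2k}}$, and in particular that the anti-reflexive passage from ${\cal P}_{2k}$ to $\underline{{\cal P}_{2k}}$ correctly captures the requirement $x\neq y$ rather than discarding a genuine collision of a codeword with itself. $\Box$
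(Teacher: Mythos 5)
Your proposal is correct and follows essentially the same route as the paper's own proof: regularity of $\underline{{\cal P}_k}$ for Cond.~\ref{1}, the reduction of Cond.~\ref{2} to $(X\times X)\cap\underline{{\cal P}_{2k}}=\emptyset$ via Lemma~\ref{Pi2} and recognizability of $X\times X$, and Sardinas--Patterson applied to the regular set ${\cal P}_k(X)$ for Cond.~\ref{4}. The one detail the paper makes explicit that you handle only implicitly is that $x\in X$ implies ${\cal P}_k(x)\neq\emptyset$ (by reflexivity), which is what licenses dropping the non-emptiness hypothesis in Cond.~\ref{2}; your treatment of the anti-reflexive restriction is otherwise exactly the paper's.
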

\begin{proof} Let $X$ be a regular code. We consider one by one our Conds. \ref{1}, \ref{2}, and \ref{4}:

\smallskip
-- {\it Cond. \ref{1}} According to Proposition \ref{Pi-regular},  $\underline{{\rm \cal P}_k}(X)$ is a regular set, hence $\underline{{\rm \cal P}_k}(X)\cap X$ itself is regular, therefore
one can decide whether Cond. \ref{1} holds.

\smallskip
-- {\it Cond. \ref{2}}  Since ${\cal P}_k$ is a symmetric binary relation, and according to Lemma \ref{Pi2},
we have:
 ${\cal P}_k^{-1}\left({\cal P}_k\left(X\right)\right)\cap X={\cal P}_k^{2}(X)\cap X={\cal P}_{2k}(X)\cap X$.
In addition $x\in X$ implies ${\cal P}_k(x)\ne\emptyset$, therefore Cond. \ref{2}
is equivalent to  $(X\times X)\cap{\cal P}_{2k}\subseteq id_{A^*}$.
This last condition is equivalent to $(X\times X)\cap\left ({\rm\cal P}_{2k}\cap\overline{id_{A^*}}\right)=\emptyset$, thus $
(X\times X)\cap \underline{{\rm\cal P}_{2k}}=\emptyset$.
According to Proposition  \ref{Pi-regular} and 
since $X\times X$ is a recognizable relation,
the set $(X\times X)\cap \underline{{\rm\cal P}_{2k}}$ is regular, therefore one can decide whether  $X$ satisfies Cond. {\ref 2}.

\smallskip
-- {\it Cond. \ref{4}}
According to Proposition \ref{Pi-regular},  the set  ${\cal P}_k(X)$ itself is regular. Consequently,  by applying Sardinas and Patterson algorithm, it can be decided whether $X$ satisfies Cond. \ref{4}.
~~~~~~~~~~~~~~~~~~~~~~~~~~~~~~~~~~~~~~~~~~~~~~~~~~~~~~~~~~~~~~
$\Box$
\end{proof}
It remains to study the bahaviour of $X$ wrt. Cond. \ref{3}.
In order to do this, we proceed by establishing  the two  following results:
\begin{proposition}
\label{Comp-Pref-Independent}
Let $X\subseteq A^*$ be a non-complete regular $\underline{{\rm \cal P}_k}$-independent  code. Then a  complete regular $\underline{{\rm \cal P}_k}$-independent code containing $X$ exists.
\end{proposition}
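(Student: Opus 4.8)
The goal is to embed a non-complete regular $\underline{{\cal P}_k}$-independent code $X$ into a complete one that remains $\underline{{\cal P}_k}$-independent. The natural tool is Theorem~\ref{EhRz}: it already produces, from any non-complete code $X$, a complete code $Z=X\cup Y$ with $Y=(zU)^*z$, where $z\notin{\rm F}(X^*)$ is overlapping-free and $U=A^*\setminus(X^*\cup A^*zA^*)$. The plan is to run this construction but to choose the marker word $z$ carefully so that the $\underline{{\cal P}_k}$-independence of $X$ is preserved by the larger code $Z$. Since $X$ is regular, $X^*$ and ${\rm F}(X^*)$ are regular, so $A^*\setminus{\rm F}(X^*)$ is regular and nonempty (as $X$ is non-complete); I can therefore pick $z$ inside this set with additional length and form constraints, and $Z$ will automatically be regular by the remark following Theorem~\ref{EhRz}.

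The heart of the matter is verifying $\underline{{\cal P}_k}(Z)\cap Z=\emptyset$, i.e.\ that no two distinct words of $Z$ are at prefix-distance $\le k$. I would split this into three cases according to where the two words lie: (a) both in $X$, which holds by hypothesis; (b) both in $Y$; (c) one in $X$ and one in $Y$. The key leverage is that every word of $Y=(zU)^*z$ begins with the marker $z$ and, crucially, every factorization over $Y$ forces $z$ to appear as a "separator''; because $z$ is overlapping-free and $z\notin{\rm F}(X^*)$, the words of $Y$ have a very rigid shape. For case (c), if $x\in X$ and $y\in Y$ satisfy $d_{\rm P}(x,y)\le k$, then $x\wedge y$ is long (their lengths differ from the common prefix by at most $k$ in total); I would argue that a long common prefix would force $x$ to contain $z$ as a factor, contradicting $z\notin{\rm F}(X^*)\supseteq{\rm F}(X)$, provided $z$ is chosen longer than $k$. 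For case (b), two distinct words of $Y$ either have different lengths by a large amount or share a long prefix up to a point where the $U$-blocks diverge; since $U$-blocks are themselves separated by the overlapping-free marker $z$, a short prefix-distance would again force an impossible overlap of $z$ with itself.

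Concretely, I expect the right choice is $|z|\ge k+1$ (or perhaps $|z|>k$ together with $z$ overlapping-free and $z\notin{\rm F}(X^*)$), so that whenever two words share a common prefix $p$ with both tails of total length $\le k$, the tails are too short to ``cross'' a full copy of $z$. This is what lets me convert a hypothetical small prefix-distance into a statement about $z$ being a factor of a word of $X$, or about $z$ overlapping a shifted copy of itself, each of which is excluded by construction. The main obstacle, and the step deserving the most care, is case (c) and the boundary subcase of (b) where one word is a prefix of the other: here I must rule out that a word of $Y$ is an extension-by-$\le k$-letters of a word of $X$ (or vice versa). Handling this cleanly is exactly where the overlapping-freeness of $z$ and the condition $z\notin{\rm F}(X^*)$ do their work, and where I would spend the bulk of the argument; the completeness of $Z$ and its regularity are then immediate from Theorem~\ref{EhRz} and the preceding remark.
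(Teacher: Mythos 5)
Your overall strategy --- run Theorem~\ref{EhRz} with a carefully chosen marker $z$ and check the mixed and $Y$--$Y$ cases --- is exactly the paper's, but the step you yourself flag as the one ``deserving the most care'' does not close with the choice of $z$ you propose, and this is a genuine gap rather than a detail. In the mixed case ($x\in X$, $y\in Y$, $d_{\rm P}(x,y)\le k$) the common prefix $x\wedge y$ contains the final occurrence of $z$ in $y$ only after up to $k$ of its last letters have been cut off; so you can conclude only that some prefix of $z$ of length $\ge |z|-k$ is a factor of $x$, and nothing in the hypotheses $z\notin{\rm F}(X^*)$, $z$ overlapping-free, $|z|\ge k+1$ forbids that. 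Concretely: take $A=\{a,b\}$, $X=\{a\}$, $k=1$, $z=ab$; all three of your conditions hold, yet $ab\in Y$ and $d_{\rm P}(a,ab)=1$, so $Z=X\cup Y$ is not $\underline{{\cal P}_1}$-independent. What is needed is a \emph{form} constraint, not just a length constraint: the paper takes $z=z_0ab^{|z_0|}$ with $z_0\notin{\rm F}(X^*)$ and $|z_0|\ge k$ ($a$ the first letter of $z_0$, $b\ne a$), so that truncating up to $k$ letters from the end of $z$ still leaves $z_0$ intact, and the contradiction is with $z_0\notin{\rm F}(X^*)$ rather than with $z\notin{\rm F}(X^*)$.

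The same padding also does the work in the $Y$--$Y$ case that your ``$z$ overlaps a shifted copy of itself'' idea does not cover. When $y\ne y'\in Y$ diverge after their longest common prefix $p$ with both tails nonempty, there is no self-overlap of $z$ to exploit, because the two suffix occurrences of $z$ extend past $p$ into letters where $y$ and $y'$ no longer agree. The paper instead uses that both $y$ and $y'$ end in $b^{|z_0|}$ with $|z_0|\ge k$: both tails are then nonempty suffixes of $b^{|z_0|}$, hence begin with the same letter $b$, contradicting the maximality of $p$. Your overlap argument is the right one only for the subcase where one of $y,y'$ is a prefix of the other (there, overlapping-freeness of $z$ together with $|z|\ge k+1$ does force the two words to coincide). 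So: right skeleton and correct identification of where the difficulty sits, but the decisive construction of $z$ is missing and, with the $z$ you propose, two of the three subcases fail.
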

\begin{proof}
Beforehand, in view of Theorem \ref{EhRz}, we indicate the construction of  a convenient  word $z \in A^*\setminus {\rm F}(X^*)$.
Since $X$ is a non-complete set, by definition some word $z_0$ exists in $A^*\setminus {\rm F}(X^*)$: without loss of generality, we assume $|z_0|\ge k$ (otherwise, we substitute to $z_0$ any word in $z_0A^{k-|z_0|}$).
 Let $a$ be the initial character of $z_0$,
 $b$ be a character different of $a$, and $z=z_0ab^{|z_0|}$.  
Classically,  $z$ is overlapping-free (see e.g.  \cite[Proposition 1.3.6]{BPR10}):
set $U=A^*\setminus (X^*\cup A^*zA^*)$, $Y=z(Uz)^*$, and $Z=X\cup Y$.

\smallskip
According to Theorem \ref{EhRz}, the set $Z$ is a (regular)  complete code. For proving that $Z$ is $\underline{{\rm \cal P}_k}$-independent, we  argue by  contradiction.
By assuming that $\underline{{\rm \cal P}_k}(Z)\cap Z\ne\emptyset$, according to the construction of $Z$, exactly one of the two following cases occurs:

\smallskip
(a) Firstly, $x\in X$, $y\in Y$ exist such that $(x,y)\in \underline{{\rm \cal P}_k}$.
With this condition we have $d_{\rm P}(x,y)=d_{\rm P}(y,x)=\left|(x\wedge y)^{-1}x\right|+\left|(x\wedge y)^{-1}y\right|\le k$.
According to the  construction of $Y$,  the word $z=z_0ab^{|z_0|}$ is a suffix of $y$. It follows from 
$\left|(x\wedge y)^{-1}y\right|\le k\le |z_0|$ that $z_0ab^{|z_0|-1}\in {\rm F}(x\wedge y)$, thus $z_0\in {\rm F}(x)$:
a  contradiction with  $z_0\notin {\rm F}(X^*)$. 

\smallskip
(b) Secondly, $y,y'\in Y$ exist such that $(y,y')\in\underline{ {\rm \cal P}_k}$. Let $p=y\wedge y'$, $u=p^{-1}y$, and  $u'=p^{-1}y'$.

\smallskip
~~~~~~(b1)
At first, assume  $p\in\{y,y'\}$ that is, 
without loss of generality, $y'=p$. With such a condition, $y'$ is a prefix of $y$.
Since we have $y,y'\in z(Uz)^*$, 
and since $z$ is an overlapping-free word, necessarily a word $v\in (Uz)^*$ exists such that $y=y'v$.
 It follows from $|v|=d_{\rm P}(y,y')\le k\le |z|-1$ that  $v=\varepsilon$, thus $y=y'$: 
a contradiction with  $\underline{{\rm\cal P}_k}$ being anti-reflexive.

\smallskip
~~~~~~(b2)
Consequently we have $p\notin\{y,y'\}$ that is, $1\le |u|\le k$ and   $1\le |u'|\le k$ (see Figure 3).
By construction, we have   $b^{|z_0|}\in {\rm S}(z)\subseteq {\rm S}(y)\cap {\rm S}(y')$:
it follows from $1\le |u|\le k\le |z_0|$ and   $1\le |u'|\le k\le |z_0|$ that $u,u'\in {\rm S}(b^{|z_0|})\setminus\{\varepsilon\}$, thus
 $u, u'\in bb^*$:
a contradiction with  $p=y\wedge y'$.

\smallskip
\noindent
In any case we obtain a contradiction, therefore $Z$ is $\underline{{\rm \cal P}_k}$-independent.
~~~~~~~~~~~~~~~~~~
$\Box$
\end{proof}
\begin{figure}
\begin{center}
\label{FigurePrefixProposition3}
\includegraphics[width=8cm,height=6cm]{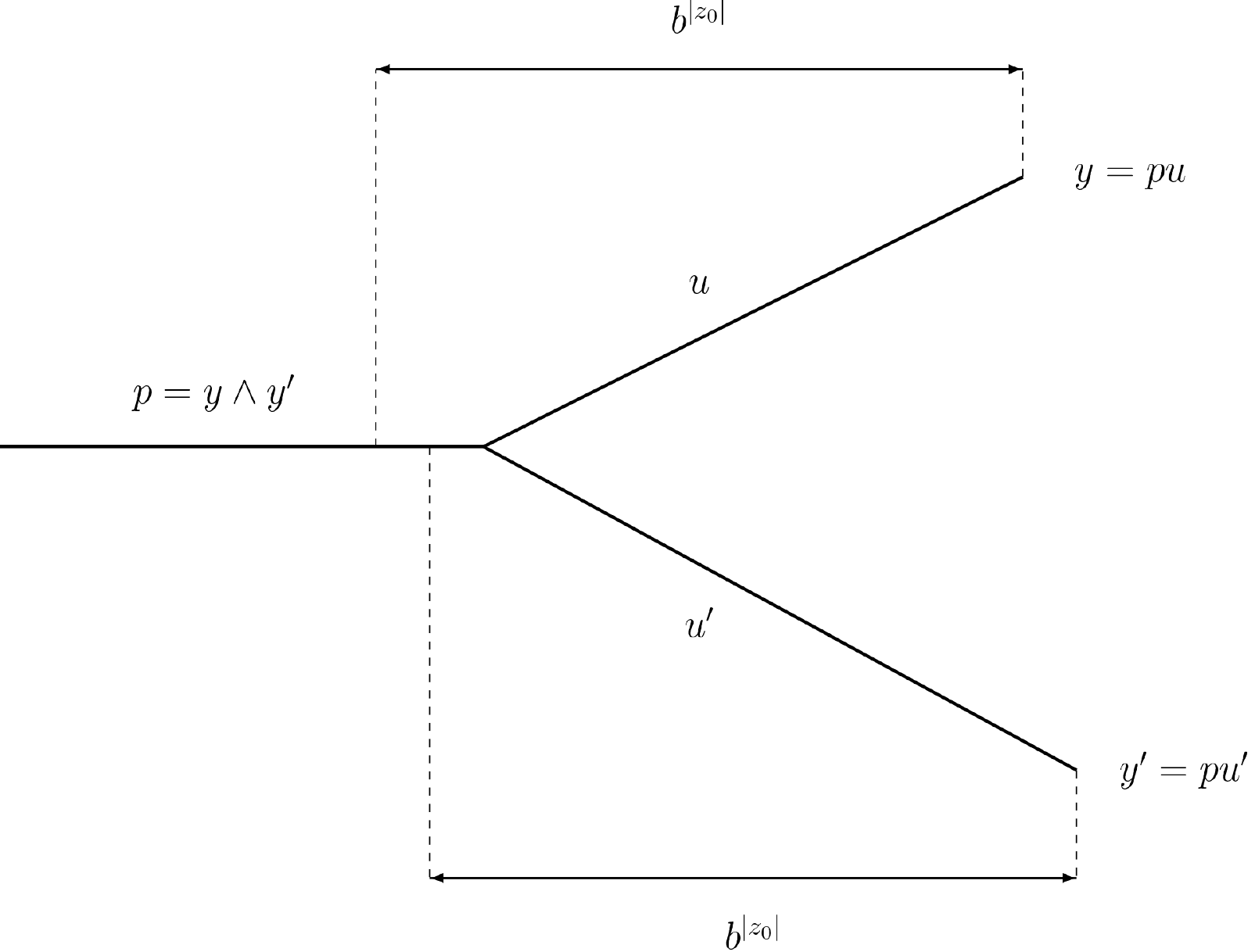}
\end{center}
\caption[]{\small Proof of Proposition \ref{Comp-Pref-Independent}: the case where we have $y,y'\in Y$ and $(y,y')\in{\cal P}_k$, with $p\notin\{y,y'\}$.}
\end{figure}
\begin{proposition}
\label{classic11-Pi-k}
Let $X\subseteq A^*$ be  a regular code.  
Then $X$  is maximal in the family of $\underline{{\rm \cal P}_k}$-independent codes if, and only if, we have $\mu(X)=1$.
\end{proposition}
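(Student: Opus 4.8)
The plan is to prove the equivalence by reducing it to Sch\"utzenberger's characterisation (Theorem \ref{classic}) together with the embedding already furnished by Proposition \ref{Comp-Pref-Independent}. Since the phrase ``maximal in the family of $\underline{{\cal P}_k}$-independent codes'' only makes sense when $X$ itself belongs to that family, throughout the argument I treat $X$ as a $\underline{{\cal P}_k}$-independent regular code, as required by the definition of maximality recalled in the introduction. The two implications are then handled separately, and the heavy lifting has in fact been relegated to Proposition \ref{Comp-Pref-Independent}.

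For the implication $\mu(X)=1 \Rightarrow$ maximality, I would invoke Theorem \ref{classic}: for a regular code, $\mu(X)=1$ is equivalent to $X$ being a maximal code, i.e. maximal in the whole family of codes. Consequently no code strictly contains $X$; in particular no $\underline{{\cal P}_k}$-independent code $Y$ with $X\subseteq Y$ can differ from $X$, so $X$ is maximal in the (smaller) family of $\underline{{\cal P}_k}$-independent codes. This direction is immediate and uses nothing beyond the inclusion of the family of $\underline{{\cal P}_k}$-independent codes into the family of all codes.

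For the converse I would argue by contraposition. Assume $\mu(X)\ne 1$; by the Kraft inequality $\mu(X)\le 1$, hence $\mu(X)<1$, and Theorem \ref{classic} yields that $X$ is not complete. Now $X$ is a \emph{non-complete regular $\underline{{\cal P}_k}$-independent code}, precisely the hypothesis of Proposition \ref{Comp-Pref-Independent}, which produces a complete regular $\underline{{\cal P}_k}$-independent code $Z$ with $X\subseteq Z$. Because $Z$ is complete while $X$ is not, the inclusion is strict, $X\subsetneq Z$, and $Z$ lies in the family of $\underline{{\cal P}_k}$-independent codes; this contradicts the maximality of $X$ in that family. Hence maximality forces $\mu(X)=1$.

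The only genuine difficulty in this statement is the existence of the strict embedding $X\subsetneq Z$ into a complete $\underline{{\cal P}_k}$-independent code, and that difficulty has already been overcome in Proposition \ref{Comp-Pref-Independent}, whose proof supplies an explicit overlapping-free word $z\notin {\rm F}(X^*)$ and, following the scheme of Theorem \ref{EhRz}, the set $Y=z(Uz)^*$ that preserves $\underline{{\cal P}_k}$-independence. Granting that proposition, the present argument is purely formal, and I do not anticipate any further obstacle.
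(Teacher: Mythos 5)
Your proof is correct and follows essentially the same route as the paper: Theorem \ref{classic} gives the direction from $\mu(X)=1$ to maximality, and the converse is handled by contraposition via the embedding of Proposition \ref{Comp-Pref-Independent}. The only addition is your explicit justification that the inclusion $X\subsetneq Z$ is strict (completeness of $Z$ versus non-completeness of $X$), a detail the paper leaves implicit.
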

\begin{proof}
According to Theorem \ref{classic}, $\mu(X)=1$ implies $X$ being a maximal code, thus  $X$ being maximal as a $\underline{{\rm \cal P}_k}$-independent code. For the converse, we argue by contrapositive.
Once more according to Theorem \ref{classic}, $\mu(X)\ne 1$ implies $X$ non-complete.
According to Proposition \ref{Comp-Pref-Independent}, some $\underline{{\rm \cal P}_k}$-independent code strictly containing $X$ exists, hence $X$ is not maximal as a $\underline{{\rm \cal P}_k}$-independent code.
~~~~~~~~~~~~~~~~~~~~~~~~~~~
$\Box$
\end{proof}
\noindent
If $X$ is a regular set, $\mu(X)$ can be computed by making use of some rational series. As a consequence, we obtain the following result:
\begin{proposition}
\label{c3_Pi}
One can decide whether a given regular code $X$ satisfies  Cond. \ref{3}  wrt. ${\rm \cal P}_k$. 
\end{proposition}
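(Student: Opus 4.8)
The plan is to reduce Cond. \ref{3} to a purely numerical test by invoking Proposition \ref{classic11-Pi-k}, and then to appeal to the computability of the measure for regular sets. First I would recall that, by Proposition \ref{classic11-Pi-k}, for a regular code $X$ the property of being maximal in the family of $\underline{{\cal P}_k}$-independent codes, that is Cond. \ref{3}, is equivalent to the equality $\mu(X)=1$. Hence deciding Cond. \ref{3} amounts to deciding whether $\mu(X)=1$.

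Second, since $X$ is regular, I would fix a finite (say deterministic) automaton recognizing $X$ and weight each transition labelled by $a\in A$ with $\mu(a)=1/|A|$. The quantity $\mu(X)=\sum_{w\in X}\mu(w)$ is then the sum of the weights of all successful paths, that is the value of the rational series attached to this weighted automaton. Because $X$ is a code, the Kraft inequality guarantees $\mu(X)\le 1$, so this series converges; its value is a rational number that can be computed exactly, for instance by solving the finite linear system governed by $I-M$, where $M$ is the weighted transition matrix restricted to the relevant states. Having obtained $\mu(X)\in\mathbb{Q}$ explicitly, one simply tests whether $\mu(X)=1$, and by the equivalence of the first step this settles Cond. \ref{3}.

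The essential obstacle has in fact already been cleared upstream, in Propositions \ref{Comp-Pref-Independent} and \ref{classic11-Pi-k}: the embedding of a non-complete $\underline{{\cal P}_k}$-independent code into a complete one is what forces maximality in this family to coincide with completeness, i.e.\ with $\mu(X)=1$. Once that equivalence is in hand, nothing subtle remains, since the evaluation of a positive Bernoulli measure from a finite automaton recognizing a regular set is a standard and effective computation. Thus the only genuine content of Proposition \ref{c3_Pi} is the reduction already supplied by Proposition \ref{classic11-Pi-k}, and the decidability follows at once.
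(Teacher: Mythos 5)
Your proposal matches the paper's argument exactly: the paper derives this proposition immediately from Proposition \ref{classic11-Pi-k} together with the remark that $\mu(X)$ can be computed for a regular set by means of a rational series. Your additional detail on evaluating the series via the weighted transition matrix is a correct elaboration of that same step, so nothing further is needed.
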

\begin{remark}
Given a pair of words $w,w'$, their {\it suffix} distance is $d_{\rm S}= |w|+|w'|-2|s|$,
where  $s$ denotes the longest word in ${\rm S}(w)\cap{\rm S}(w')$. Let ${\cal S}_{k}$ be  the  binary relation
defined by $(w,w')\in {\cal S}_{k}$ if, and only if,  $d_{\rm S}(w,w')\le k$.
Given a word $w\in A^*$, denote by $w^R$ its {\it reversal} that is,  for $a_1,\cdots,a_n\in A$,  we have $w^R=a_n\cdots a_1$ if, and only if,  $w=a_1\cdots a_n$ holds.
For every pair $w,w'\in A^*$, we have $d_{\rm S}(w,w')=d_{\rm P}(w^R,w'^R)$, hence  
$(w,w')\in {\cal S}_k$ is equivalent to $(w^R,w'^R)\in {\cal P}_k$.
As a consequence, given a regular code, one can decide whether it satisfies any of Conds.  \ref{1}--\ref{4} wrt. ${\cal S}_k$. 
\end{remark}
%
\section{Error detection and the  factor metric}
\label{Phi}
By definition, given a pair of words $w,w'\in A^*$, at least one tuple of words, say $(u,v,u',v')$, exists such that $d_{\rm F}(w,w')=|u|+|v|+|u'|+|v'|$.
More precisely, we have $w=ufv$, $w'=u'fv'$, with $f$ being of maximum length.
Such a configuration is illustrated by Figure 4 which, in addition, can provide some support in view of examining the proof of Proposition \ref{Com-Fact--Independent}.

Actually the word $f$, thus the tuple $(u,v,u',v')$, needs not to be unique (see Example \ref{Facts}(1)).
Due to this fact, the construction in the proof of the preceding proposition \ref{Pi-regular} unfortunately  cannot be extended
in order to obtain a finite automaton with behaviour $\underline{\cal F}_k$.

For every positive integer $k$, the relation ${\cal F}_k$ is reflexive and symmetric that is, the equality ${\cal F}_k^{-1}={\cal F}_k$ holds.
In addition, with the preceding notation, we have   ${\cal P}_{k}\cup{\cal S}_{k}\subseteq{\cal F}_{k}\subseteq {\cal F}_{k+1}$.
\begin{figure}
\begin{center}
\label{FigureFactorDistance}
\includegraphics[width=10cm,height=4cm]{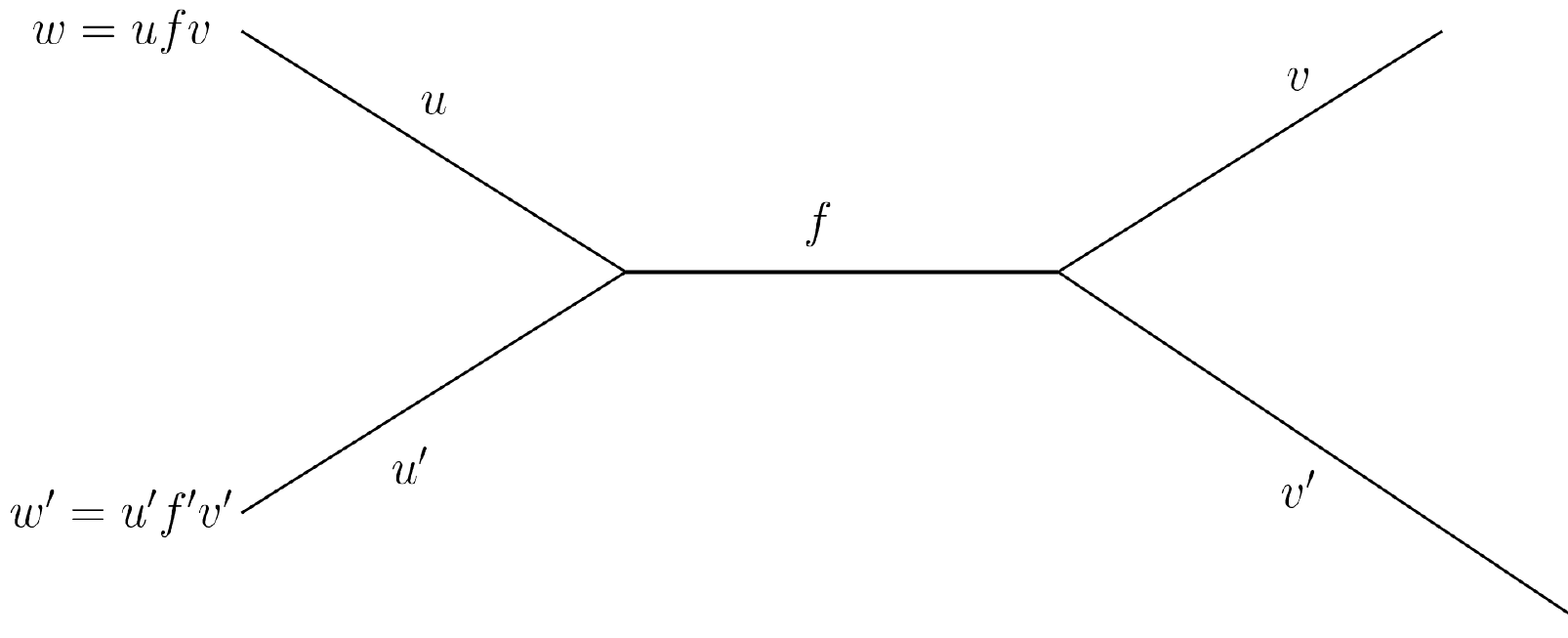}
\end{center}
\caption[]{We have $(w,w')\in{\cal F}_k$ iff.  $|u|+|v|+|u'|+|v'|\le k$.} 
\end{figure}
\begin{example}
\label{Facts}
(1) Let $w=baba babba b$ and $w'=b babba ababaa$. There are two words of maximum length in $F(w)\cap F(w')$, 
namely $f_1=ababa$ and $f_2=babba$. We have $d_{\rm F}(w,w')=|u_1|+|v_1|+|u'_1|+|v'_1|=|u_2|+|v_2|+|u'_2|+|v'_2|=12$,
where the tuples $(u_1,v_1,u'_1,v'_1)$ and $(u_2,v_2,u'_2,v'_2)$ satisfy the following equations:
\begin{eqnarray}
 w=u_1f_1v_1,w'=u'_1f_1v'_1,~~ {\rm with }~~u_1=b,v_1=bbab,u'_1=bbabba,v'_1=a,\nonumber\\
 w=u_2f_2v_2,w'=u'_2f_2v'_2,~~ {\rm with }~~u_2=baba,v_2=b,u'_2=b,v'_2=ababaa.\nonumber
\end{eqnarray}
\smallbreak
(2) 
Over the alphabet $\{a,b\}$, the code $X=\{a,ba,bb\}$ from Example \ref{Prefs}  cannot satisfy Cond. \ref{1} wrt. ${\cal F}_1$.
 Indeed, we have $ba\in\underline{{\cal F}_1}(a)\cap X$.
Since we have 
 ${\cal F}_1^{-1}{\cal F}_1\left(a\right)\cap X=\{a,ba\}$, $X$ cannot satisfy Cond. \ref{2}.
Althought  $X$ is complete, since it does not belong to the family of $\underline{{\cal F}_1}$-independent codes, $X$ cannot satisfy Cond. \ref{3} wrt. ${\cal F}_1$.
It follow from $\varepsilon\in {\cal P}_1(X)\subseteq {\cal F}_1(X)$ that  $X$ can no more satisfy Cond. \ref{4}.

\smallskip
(3) Take $A=\{a,b\}$ and consider the context-free bifix code $Y=\{a^nb^n: n\ge 1\}$.
We have $\underline{{\cal F}_1}(Y)=\bigcup_{n\ge 1}\{a^{n-1}b^n, a^{n+1}b^n, ba^nb^n, a^nb^{n-1}, a^nb^{n}a, a^nb^{n+1}\}$.
This implies $\underline{{\cal F}_1}(Y)\cap Y=\emptyset$, thus $Y$ being $1$-error-detecting wrt. ${\cal F}_1$ (Cond. \ref{1}). 
Regarding error correction, we have $a^{n+1}b^{n+1}\in{\cal F}_1^2(a^nb^n)$, therefore $X$ cannot satisfy Cond. \ref{2} wrt. ${\cal F}_1$. We have $\mu(X)=\sum_{n\ge 1} \left(\frac{1}{4}\right)^n<1$, whence $Y$  cannot satisfy Cond. \ref{3}.
Finally, since we have $(a^nb^{n-1})(ba^nb^n)=(a^nb^n)(a^nb^n)$, the set $\widehat{{\cal F}_1}(Y)={\cal F}_1(Y)$  cannot satisfy Cond. \ref{4}.
\end{example}

\noindent
The following property allows some noticeable connection between the frameworks of prefix, suffix, and factor metrics:
\begin{lemma}
\label{Phi-2}
Given a positive integer $k$ we have ${\cal F}_{k}={\ \cal F}_{1}^k=({\cal P}_1\cup{\cal S}_1)^k$.
\end{lemma}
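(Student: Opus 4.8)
The plan is to establish the chain of equalities in two stages. First I would prove the single-step identity ${\cal F}_1={\cal P}_1\cup{\cal S}_1$; raising both sides to the $k$-th power then yields $({\cal P}_1\cup{\cal S}_1)^k={\cal F}_1^k$ for free, so that it only remains to prove ${\cal F}_k={\cal F}_1^k$. For the single-step identity, the inclusion $\supseteq$ is already recorded above (the relation ${\cal P}_j\cup{\cal S}_j\subseteq{\cal F}_j$, read at $j=1$). For $\subseteq$, take $(w,w')\in{\cal F}_1$. If $d_{\rm F}(w,w')=0$ then $w=w'$, a pair lying in $id_{A^*}$ and hence in both ${\cal P}_1$ and ${\cal S}_1$. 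Otherwise $d_{\rm F}(w,w')=1$, so writing $w=ufv$, $w'=u'fv'$ with $f$ of maximum length and $|u|+|v|+|u'|+|v'|=1$, exactly one of $u,v,u',v'$ is a single letter and the rest are empty. A short case check shows this means $w'$ is obtained from $w$ by deleting or inserting one letter at a single end: an operation at the right end preserves the longest common prefix and gives $d_{\rm P}(w,w')=1$, whence $(w,w')\in{\cal P}_1$, while an operation at the left end preserves the longest common suffix and gives $d_{\rm S}(w,w')=1$, whence $(w,w')\in{\cal S}_1$.

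For the inclusion ${\cal F}_1^k\subseteq{\cal F}_k$, the key sub-lemma is that one ${\cal F}_1$-step changes the factor distance by at most one: if $(w',w'')\in{\cal F}_1$, then $d_{\rm F}(w,w'')\le d_{\rm F}(w,w')+1$ for every $w$. Indeed, fix a longest common factor $f$ of $w$ and $w'$. If $w''$ arises from $w'$ by inserting a letter at an end, then $f$ is still a factor of $w''$, so the longest common factor does not decrease while $|w''|=|w'|+1$; if $w''$ arises by deleting an end letter, then a common factor of length at least $|f|-1$ survives while $|w''|=|w'|-1$. In both cases $d_{\rm F}(w,w'')\le d_{\rm F}(w,w')+1$. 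Given $(w,w')\in{\cal F}_1^k$, a chain $w=w_0,\dots,w_k=w'$ with consecutive pairs in ${\cal F}_1$ then yields $d_{\rm F}(w,w')\le k$ by iterating this bound, i.e.\ $(w,w')\in{\cal F}_k$.

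For the reverse inclusion ${\cal F}_k\subseteq{\cal F}_1^k$, I would exhibit an explicit chain of single-letter end operations. Let $(w,w')\in{\cal F}_k$, fix a maximum-length common factor $f$, and write $w=ufv$, $w'=u'fv'$ with $m:=|u|+|v|+|u'|+|v'|=d_{\rm F}(w,w')\le k$. Transform $w$ into $w'$ by deleting the $|v|$ trailing letters (each step in ${\cal P}_1$), then the $|u|$ leading letters (each in ${\cal S}_1$), reaching $f$; then prepend the $|u'|$ letters of $u'$ (each in ${\cal S}_1$) and append the $|v'|$ letters of $v'$ (each in ${\cal P}_1$). Every step lies in ${\cal P}_1\cup{\cal S}_1={\cal F}_1$, and the chain has exactly $m\le k$ steps; since ${\cal F}_1$ is reflexive, it can be padded with identity steps to length exactly $k$, whence $(w,w')\in{\cal F}_1^k$.

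The main obstacle is precisely this last inclusion. Unlike the prefix case treated in Proposition~\ref{Pi-regular}, here the maximum-length common factor $f$, and hence the tuple $(u,v,u',v')$, need not be unique, so one must take care that the peeling-and-rebuilding route is anchored at a genuine longest common factor and that its total cost equals $d_{\rm F}(w,w')$ rather than merely bounding it from above. Choosing $f$ of maximum length is exactly what guarantees $|u|+|v|+|u'|+|v'|=d_{\rm F}(w,w')$, keeping the step count at $m\le k$; any shorter common factor would overcount and break the bound.
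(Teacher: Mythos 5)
Your proof is correct, and while the first two stages match the paper, the third stage takes a genuinely different route. The identity ${\cal F}_1={\cal P}_1\cup{\cal S}_1$ and the inclusion ${\cal F}_1^k\subseteq{\cal F}_k$ are handled essentially as in the paper (the paper invokes the triangle inequality for $d_{\rm F}$ directly where you re-derive its one-step case by tracking how a longest common factor survives an end insertion or deletion; that is a harmless elaboration, though you should also note the trivial identity-step case). The real divergence is in ${\cal F}_k\subseteq{\cal F}_1^k$: the paper argues by induction on $k$, peeling a single letter off $w$ to form $sfp$ and then carefully verifying that $f$ \emph{remains} a maximum-length common factor of the intermediate word and $w'$, so that $d_{\rm F}(sfp,w')=d_{\rm F}(w,w')-1$ and the induction hypothesis applies. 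You avoid that delicate verification entirely by exhibiting an explicit chain $w\to\cdots\to f\to\cdots\to w'$ of exactly $m=|u|+|v|+|u'|+|v'|=d_{\rm F}(w,w')\le k$ single-letter end operations, each lying in ${\cal P}_1\cup{\cal S}_1$, and then padding with identity steps using reflexivity of ${\cal F}_1$. Because membership in ${\cal F}_1^k$ only requires counting steps, you never need to know the factor distance of the intermediate pairs, which is precisely the technical burden the paper's induction carries. Your closing observation, that anchoring the chain at a \emph{maximum-length} common factor is what keeps the step count equal to $d_{\rm F}(w,w')$ rather than overcounting, is exactly the point on which the argument hinges, and you state it correctly. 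The net effect is a shorter and arguably more transparent proof of the same lemma.
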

\begin{proof}
-- We start by proving that we have ${\cal F}_{1}^k=({\cal P}_1\cup{\cal S}_1)^k$.
A indicated above, we have ${\cal P}_1\cup{\cal S}_1\subseteq {\cal F}_{1}$.
Conversely,  given $(w,w')\in {\cal F}_{1}$,  
some tuple of words $(u,v,u',v')$ exists such that $w=ufv$, $w'=u'fv'$, with $0\le |u|+|v|+|u'|+|v'|\le 1$, thus $|u|+|v|+|u'|+|v'|\in\{0,1\}$.
More precisely, at most one element of the set $\{|u|,|v|,|u'|,|v'|\}$ is a non-zero integer: this implies $(w,w')\in{\cal P}_1\cup{\cal S}_1$.
Consequently we have ${\cal F}_{1}={\cal P}_1\cup{\cal S}_1$, thus ${\cal F}_{1}^k=({\cal P}_1\cup{\cal S}_1)^k$.

-- Now, we prove we have ${\cal F}_{1}^k\subseteq {\cal F}_{k} $. Given a pair of words $(w,w')\in {\rm \cal F}_{1}^k$,
 there is some  sequence of words $(w_i)_{0\le i\le k}$  such that $w=w_0$, $w'=w_{k}$, and $d_{\rm F}(w_i,w_{i+1})\le 1$, for each $i\in [0, k-1]$.
We have $d_{\rm F}(w,w')\le \sum_{0\le i\le k-1}d_{\rm F}(w_i,w_{i+1})\le k$, thus $(w,w')\in  {\cal F}_{k}$.

-- For proving that  the inclusion ${\cal F}_{k}\subseteq {\cal F}_{1}^k$ holds,
we argue by induction over $k\ge 1$. The property trivially holds for $k=1$.
Assume that we have ${\rm \cal F}_{k}\subseteq {\rm \cal F}_1^k$, for some $k\ge 1$.  Let $(w,w')\in {\rm \cal F}_{k+1}$ and 
let $f\in {\rm F}(w)\cap {\rm F}(w')$ be a word with maximum length; set $w=ufv$, $w'=u'fv'$. 

\smallskip
(a) Firstly, assume that at least one of the integers  $|w|-|f|$, $|w'|-|f|$ belongs to $[2,k+1]$ that is, without loss of generality  $2\le |w|-|f|=|u|+|v|\le k+1$.
With this condition, there are words $s\in {\rm S}(u)$, $p\in {\rm P}(v)$ such  that $w\in A^hsfpA^{h'}$, with $sp\ne\varepsilon$  and  $h+h'=1$.
On a first hand, it follows from $sfp\in {\rm F}(w)$ that  $d_{\rm F}(w,sfp)=|w|-|sfp|=h+h'=1$, thus $(w,sfp)\in{\rm \cal F}_1$.
On the other hand, $f$ remains  a word of maximum length in ${\rm F}(sfp)\cap {\rm F}(w')$ (otherwise words $s'\in{\rm S}(s)$,  $p'\in {\rm P}(p)$ exist such that we have $s'fp'\in {\rm F}(w)\cap {\rm F}(w')$, with  $|s'|+|p'|\ge 1$), whence we have 
$d_{\rm F}(sfp,w')=|sfp|+|w'|-2|f|$. Since we have $|w|-|sfp|=1$, we obtain $d_{\rm F}(sfp,w') =(|w|-1)+|w'|-2|f|=d_{\rm F}(w,w')-1\le (k+1)-1$,
thus $(sfp,w')\in{\cal F}_k$ that is, by induction, $(sfp,w')\in {\cal F}_1^k$. Since we have  $(w,sfp)\in{\rm \cal F}_1$, this implies $(w,w')\in {\cal F}_1^{k+1}$.

\smallskip
(b) In the case where we have  $|w|-|f|\le 1$ and $|w'|-|f|\le 1$ that is,  $(w,f)\in {\cal F}_1$ and $(f,w')\in  {\cal F}_k$, 
by induction we obtain $(f,w')\in {\cal F}_1^k$, therefore we have  $(w,w')\in {\cal F}_1^{k+1}$.

\smallskip
\noindent 
In any case the condition $(w,w')\in {\rm \cal F}_{k+1}$ implies  $(w,w')\in {\cal F}_1^{k+1}$, hence we have ${\rm \cal F}_{k+1}\subseteq {\cal F}_1^{k+1}$.
As a consequence, for every $k\ge 1$ the inclusion  ${\rm \cal F}_{k}\subseteq{\cal F}_1^{k}$ holds: this  completes the proof.
~~~~~~~~~~~~~~~~~~~~~~~~~~~~~~~~~~~~~~~~~~~~~~~~~~~~~~~~~~~~~~~~~~~~~~~~~~~~~
$\Box$
\end{proof}
As a direct consequence of Lemma \ref{Phi-2}, we obtain the following result:
\begin{proposition}
\label{Phi-regular} Each of the following properties holds:

{\rm (i)}
The relation ${\cal F}_k$ is regular.

{\rm (ii)} Given  a regular code $X$, it can be decided whether $X$  satisfies Cond. \ref{4} wrt.  ${\cal F}_k$.

{\rm (iii)} Given  a finite code $X$, one can decide whether $X$ satisfies any of  Conds.  \ref{1}--\ref{4} wrt.  ${\cal F}_k$.
\end{proposition}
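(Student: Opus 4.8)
The plan is to read all three assertions off the identity ${\cal F}_k=({\cal P}_1\cup{\cal S}_1)^k$ of Lemma \ref{Phi-2}, together with the closure properties of regular relations recalled in Section \ref{prelim}. For \textrm{(i)} I would first observe that ${\cal P}_1$ is regular by Proposition \ref{Pi-regular}, while ${\cal S}_1$ is regular because, through the reversal anti-automorphism, $(w,w')\in{\cal S}_1$ is equivalent to $(w^R,w'^R)\in{\cal P}_1$, as noted in the remark following Proposition \ref{c3_Pi}. Since the family of regular relations over $A^*\times A^*$ is closed under union and under composition, the relation ${\cal P}_1\cup{\cal S}_1$ is regular, and hence so is its $k$-fold composition ${\cal F}_k$.

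For \textrm{(ii)} I would combine \textrm{(i)} with the fact, recalled in Section \ref{prelim} (\cite[Theorem IV.1.3]{S03}), that the image of a regular set under a regular relation is regular: thus ${\cal F}_k(X)$ is regular whenever $X$ is. Because ${\cal F}_k$ is reflexive we have $\widehat{{\cal F}_k}(X)={\cal F}_k(X)$, so Cond. \ref{4} amounts to asking whether the regular set ${\cal F}_k(X)$ is a code, which the Sardinas and Patterson algorithm decides, exactly as for ${\cal P}_k$ in Proposition \ref{decid-Pi1}.

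For \textrm{(iii)} assume $X$ finite. Cond. \ref{4} is covered by \textrm{(ii)}. For Cond. \ref{1} it suffices to test, over the finitely many pairs $(x,y)$ of distinct codewords, whether $d_{\rm F}(x,y)\le k$; this value is computable since a maximum-length common factor is. For Cond. \ref{2} I would reproduce the reduction used for ${\cal P}_k$: as ${\cal F}_k$ is symmetric and $x\in{\cal F}_k(x)$, the requirement ${\cal F}_k^{-1}\left({\cal F}_k(x)\right)\cap X=\{x\}$ for every $x$ is equivalent to $(X\times X)\cap\underline{{\cal F}_{2k}}=\emptyset$, using ${\cal F}_k^2={\cal F}_{2k}$ from Lemma \ref{Phi-2}; again this is the finite test $d_{\rm F}(x,y)\le 2k$ over distinct codeword pairs. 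Finally, Cond. \ref{3} reduces to the equivalence ``$X$ maximal among $\underline{{\cal F}_k}$-independent codes $\iff\mu(X)=1$'', proved as Proposition \ref{classic11-Pi-k} once the factor-metric analogue of the embedding Proposition \ref{Comp-Pref-Independent} is available; since $\mu(X)$ is computable for regular, hence finite, $X$, decidability follows.

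The only genuine obstacle concerns Conds. \ref{1} and \ref{2}: unlike the prefix case, the tuple $(u,v,u',v')$ realizing $d_{\rm F}$ need not be unique, so the tree-automaton construction of Proposition \ref{Pi-regular} does not extend to a finite transducer for $\underline{{\cal F}_k}$, and since regular relations are not closed under intersection one cannot recover $\underline{{\cal F}_k}={\cal F}_k\cap\overline{id_{A^*}}$ as a regular relation either. Consequently I cannot argue, as in Proposition \ref{decid-Pi1}, that $\underline{{\cal F}_k}(X)\cap X$ is regular for an arbitrary regular $X$; this is precisely why finiteness of $X$ is imposed in \textrm{(iii)}, the brute-force enumeration of codeword pairs sidestepping the missing regularity of $\underline{{\cal F}_k}$.
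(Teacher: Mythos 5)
Your proof is correct and follows essentially the same route as the paper: regularity of ${\cal F}_k$ via Lemma \ref{Phi-2} and the regularity of ${\cal P}_1$, ${\cal S}_1$; Sardinas--Patterson on the regular set ${\cal F}_k(X)$ for Cond. \ref{4}; and finiteness of $X$ to brute-force Conds. \ref{1}--\ref{2}. You are in fact slightly more explicit than the paper, which disposes of (iii) in one line and leaves the Cond. \ref{3} case resting implicitly on the later Propositions \ref{Com-Fact--Independent} and \ref{classic-22-Phi-k}, a dependence you correctly make visible.
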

\begin{proof}
In view of Sect. \ref{Prefix-metric}, the relations ${\cal P}_1$ and ${\cal S}_1$ are regular, therefore Property (i) comes from Lemma \ref{Phi-2}.
The proof of Property  (ii)  is  done by merely substituting ${\cal F}_k$ to ${\cal P}_k$ in the proof of the preceding proposition  \ref{decid-Pi1}.
In the case where  $X$ is finite, the same holds for ${\cal F}_k(X)$ and $\underline{{\cal F}_k}(X)$, furthermore Property (iii) holds.
$\Box$
\end{proof}
For non-finite regular sets, the question of the decidability of Conds. \ref{1}, \ref{2} remains open.
Indeed, presently no  finite automaton with behavior ${\underline{\cal F}_k}={\cal F}_k\cap \overline{id_{A^*}}$ is known. 
Actually, 
the following property holds:
\begin{proposition}
\label{Phi-not-recogn}
For every $k\ge 1$, 
${\cal F}_k$ is a  non-recognizable regular relation. 
\end{proposition}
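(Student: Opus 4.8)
Since Proposition~\ref{Phi-regular}(i) already guarantees that ${\cal F}_k$ is regular, the whole content of the statement is that ${\cal F}_k$ fails to be recognizable. The plan is to exploit the structural description of recognizable relations recalled in Sect.~\ref{prelim}: a relation $R\subseteq A^*\times A^*$ is recognizable if, and only if, finite families of recognizable sets $(T_i)_{i\in I}$, $(U_i)_{i\in I}$ exist with $R=\bigcup_{i\in I}T_i\times U_i$. From such a decomposition I would first extract the following necessary condition. For every $w\in A^*$ the section $R(w)=\{w':(w,w')\in R\}$ equals $\bigcup\{U_i:w\in T_i\}$, hence it depends only on the subset $\{i\in I:w\in T_i\}$ of the finite index set $I$. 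Consequently a recognizable relation admits at most $2^{|I|}$ pairwise distinct sections; in particular the family $\{R(w):w\in A^*\}$ is finite. This is exactly the feature that fails for $id_{A^*}$ (whose sections are all the singletons $\{w\}$), and I intend to show that it fails for ${\cal F}_k$ as well.

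Accordingly, the core step is to produce an infinite family of words whose ${\cal F}_k$-sections are pairwise distinct. I would take the witnesses $w=a^n$, for a fixed character $a\in A$ and $n\ge 1$, and examine the restriction of ${\cal F}_k(a^n)$ to $a^*$. For any $m\ge 0$ the longest common factor of $a^n$ and $a^m$ is $a^{\min(n,m)}$, so that $d_{\rm F}(a^n,a^m)=n+m-2\min(n,m)=|n-m|$; therefore ${\cal F}_k(a^n)\cap a^*=\{a^m:m\ge 0,\ |n-m|\le k\}$, which for $n>k$ is the block $\{a^{n-k},\dots,a^{n+k}\}$ of $2k+1$ consecutive powers of $a$. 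For two integers $k<n<n'$, the word $a^{n-k}$ lies in ${\cal F}_k(a^n)$ but not in ${\cal F}_k(a^{n'})$, since $n'-(n-k)>k$. Hence the sections ${\cal F}_k(a^n)$, with $n>k$, are pairwise distinct already after intersection with $a^*$, and \emph{a fortiori} as sections of ${\cal F}_k$.

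Combining the two steps yields the claim: ${\cal F}_k$ has infinitely many pairwise distinct sections, so by the necessary condition above it cannot be recognizable, whereas Proposition~\ref{Phi-regular}(i) ensures it is regular. I expect the only delicate point to be the necessary condition itself --- that a recognizable relation has finitely many distinct sections --- which has to be extracted cleanly from the decomposition $R=\bigcup_{i\in I}T_i\times U_i$; the analogous failure for $id_{A^*}$, recalled in Sect.~\ref{prelim}, provides a useful sanity check. Once that is granted, the identity $d_{\rm F}(a^n,a^m)=|n-m|$ and the resulting distinctness of the sections are entirely routine.
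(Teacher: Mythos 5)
Your argument is correct and rests on the same mechanism as the paper's proof: the decomposition $R=\bigcup_{i\in I}T_i\times U_i$ of a recognizable relation together with a pigeonhole argument on the finite index set $I$. The paper pigeonholes the pairs $(a^nb,a^n)$ directly into a common rectangle $T_i\times U_i$ and derives the contradiction from the cross pair $(a^nb,a^m)$ with $m-n\ge k$, whereas you package the finiteness of $I$ as the statement that a recognizable relation has only finitely many distinct sections and then exhibit infinitely many pairwise distinct sections ${\cal F}_k(a^n)$ via $d_{\rm F}(a^n,a^m)=|n-m|$; both versions are sound and essentially equivalent.
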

\begin{proof}
According to Proposition \ref{Phi-regular}, the relation ${\cal F}_k$ is regular.
By contradiction, assume  ${\cal F}_k$ recognizable. As indicated in the preliminaries, with this condition 
a finite set $I$ exists such that  ${\cal F}_k=\bigcup_{i\in I}(T_i\times U_i)$. 
For every $n\ge 0$, we have $(a^nb,a^n)\in{\cal F}_k$ therefore,  since $I$ is finite, there are  $i\in I$ and  $m,n\ge 1$, with $m-n\ge k$, 
such that $(a^nb,a^n),(a^mb,a^m)\in T_i\times U_i$.
This implies $(a^nb,a^m)\in  T_i\times U_i\subseteq{\cal F}_k$, thus $d_{\rm F}(a^nb,a^m)\le k$:
a contradiction with $d_{\rm F}(a^nb,a^m)=m-n+1\ge k+1$.
~~~~~~~~~~~~~~~~~~~~~~~~~~~~~~~~~~~~~~~~~~~~~~~~~~~~~~~~~~~~~~~~~~
~~~~~~~~~~~~~~~~~~~~~~~~~~
$\Box$
\end{proof}
Regarding Cond. \ref{3},  the following result holds:
\begin{proposition}
\label{Com-Fact--Independent}
Every non-complete regular $\underline{{\rm \cal F}_k}$-independent code  can be embedded into some  complete one.
\end{proposition}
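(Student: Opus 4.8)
The plan is to mirror the strategy used for the prefix metric in Proposition~\ref{Comp-Pref-Independent}, namely to apply the Ehrenfeucht--Rozenberg embedding (Theorem~\ref{EhRz}) with a carefully chosen gap word $z$, and then to verify by a case analysis that the enlarged code $Z=X\cup Y$ remains $\underline{{\cal F}_k}$-independent. The essential difference is that, whereas the prefix metric only sees the beginnings of words, the factor metric is sensitive to internal occurrences, so the forbidden word $z$ must be built to resist \emph{factor}-overlaps rather than merely prefix-overlaps. Concretely, starting from some $z_0\in A^*\setminus{\rm F}(X^*)$ with $|z_0|\ge k$, I would again set $z=z_0ab^{|z_0|}$ (with $a$ the initial letter of $z_0$ and $b\ne a$), take $U=A^*\setminus(X^*\cup A^*zA^*)$, $Y=z(Uz)^*$, and $Z=X\cup Y$. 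By Theorem~\ref{EhRz}, $Z$ is a complete regular code; it remains to establish $\underline{{\cal F}_k}(Z)\cap Z=\emptyset$.

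First I would reduce, exactly as before, to two cases depending on whether a witness pair $(w,w')\in\underline{{\cal F}_k}\cap(Z\times Z)$ has one or both members in $Y$. In the mixed case $x\in X$, $y\in Y$: since $d_{\rm F}(x,y)\le k$, there is a common factor $f$ with $|y|-|f|\le k$, so all but at most $k$ letters of $y$ lie inside $f$, hence inside a factor of $x$. Because $y$ ends in the suffix $z=z_0ab^{|z_0|}$ and $k\le|z_0|$, the block $z_0$ (or a long factor of it) survives into $f\subseteq{\rm F}(x)\subseteq{\rm F}(X^*)$, contradicting $z_0\notin{\rm F}(X^*)$. Here the key quantitative point is that $z$ is long enough (length roughly $2|z_0|$) that removing $k$ letters cannot destroy the distinguished forbidden block $z_0$; I would make this precise by tracking which of the $u,v,u',v'$ in the tuple $(u,v,u',v')$ of Figure~4 can absorb the tail $b^{|z_0|}$ and the core $z_0$.

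The harder case is $y,y'\in Y$ with $(y,y')\in\underline{{\cal F}_k}$. Let $f$ be a maximum-length common factor, so $|y|-|f|$ and $|y'|-|f|$ are each at most $k$. The plan is to exploit that both $y$ and $y'$ are built by concatenating copies of the overlapping-free word $z$ separated by blocks of $U$, so that $z$ occurs in each only at the $z$-boundaries. Since $|y|-|f|\le k<|z|$, the factor $f$ is obtained from $y$ by deleting fewer than $|z|$ letters total from its two ends; consequently $f$ still contains at least one full occurrence of $z$ (provided $|y|\ge 2|z|$; the short cases $y=z$ or $y'=z$ I would dispatch by a direct length/overlap argument as in step~(b1) of Proposition~\ref{Comp-Pref-Independent}). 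The same holds for $y'$. Using that $z$ is overlapping-free and appears in $Y$-words only as a synchronizing boundary marker, I would argue that a shared occurrence of $z$ inside $f$ forces the surrounding $(Uz)^*$-structures of $y$ and $y'$ to align, and then that the bounded-size end-differences permitted by $d_{\rm F}\le k<|z|$ are too small to reach across a full $z$-block; this yields $y=y'$, contradicting anti-reflexivity.

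I expect the main obstacle to be precisely this last alignment argument: unlike the prefix case, where a common prefix $p$ pins down the left ends immediately, here the common factor $f$ may sit anywhere inside $y$ and $y'$, so I must rule out a \emph{shifted} occurrence of $z$ that would let two distinct $Y$-words share a long factor without being equal. The overlapping-freeness of $z$ and the constraint $k\le|z_0|<|z|$ are the two levers that should close this gap, but the bookkeeping of where $f$ begins and ends relative to the $z$-markers is the delicate part and is where Figure~4 would guide the case split. Once the contradiction is reached in every subcase, $Z$ is a complete regular $\underline{{\cal F}_k}$-independent code containing $X$, which proves the proposition.
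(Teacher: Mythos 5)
Your overall strategy (the Ehrenfeucht--Rozenberg embedding of Theorem~\ref{EhRz} followed by a case analysis showing that the added part $Y$ creates no $\underline{{\cal F}_k}$-collision) is indeed the paper's strategy, and you correctly sense that the factor metric requires a marker word built to ``resist factor-overlaps.'' But you then reuse the prefix-case word $z=z_0ab^{|z_0|}$ of Proposition~\ref{Comp-Pref-Independent}, and this choice genuinely fails: the factor metric can erase up to $k$ letters from the \emph{left} end of a word of $Y$, and the forbidden block $z_0$ sits at the very beginning of $z$, so it can be destroyed. Concretely, take $A=\{a,b\}$, $k=2$, $X=\{abb\}$ (a non-complete $\underline{{\cal F}_2}$-independent code) and $z_0=aa\notin{\rm F}(X^*)$; then $z=aaabb\in Y$ and $d_{\rm F}(abb,aaabb)=3+5-2\cdot 3=2\le k$, so $Z=X\cup Y$ is \emph{not} $\underline{{\cal F}_2}$-independent. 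Your mixed-case argument tacitly assumes that the part of $y$ surviving inside $f$ still contains an occurrence of $z_0$; this is guaranteed by the trailing copy of $z$ only when $y\neq z$, and it fails for $y=z$ itself.

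The paper repairs exactly this point by taking the longer marker $z_1=a^{|z|}bz=a^{2|z_0|+1}bz_0ab^{|z_0|}$: every word of $Y_1=z_1(U_1z_1)^*$ then begins with a block of more than $|z_0|\ge k$ letters $a$ and ends with $b^{|z_0|}$, so trimming at most $k$ letters from either end only eats into these uniform blocks and an interior copy of $z_0$ always survives into $f$. The same padding is what closes the ``alignment'' step you defer in the case $y,y'\in Y$: if both $v,v'$ (resp.\ both $u,u'$) are non-empty, they are powers of $b$ (resp.\ of $a$), so $fb$ (resp.\ $af$) would be a longer common factor, contradicting the maximality of $f$; this forces $f$ to be a genuine suffix (and, symmetrically, prefix) of one of the two words, after which the overlapping-freeness of $z_1$ yields $y=y'$. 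Without the left padding neither of these steps is available, so the gap lies in the construction itself, not merely in the bookkeeping you flagged.
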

\begin{proof}
In the family of $\underline{\rm{\cal F}_k}$-independent codes, we consider a non-complete regular set $X$.
In view of Theorem \ref{EhRz},  we will construct a convenient word $z_1\in A^*\setminus F(X^*)$.
Take a word   $z_0\notin {\rm F}(X^*)$,  with $|z_0|\ge k$; let $a$ be  its initial character, and $b$ be a character different of $a$.
Consider the word $z$ that was constructed in the proof of Proposition \ref{Comp-Pref-Independent}, that is $z=z_0ab^{|z_0|}$.
Set $z_1=a^{|z|}bz=a^{2|z_0|+1}bz_0ab^{|z_0|}$: since by construction, $z_1^R$, the  reversal of $z_1$, is  overlapping-free, the same holds for $z_1$. 
 Set  $U_1=A^*\setminus \left(X^*\cup A^*z_1A^*\right)$, $Y_1=z_1\left(U_1z_1\right)^*$, and $Z_1=X\cup Y_1$. 

\smallskip
According to  Theorem \ref{EhRz}, the set  $Z_1$ is  a regular complete  code. In order to
prove that it   is $\underline{{\rm \cal F}_k}$-independent that is, $\underline{{\rm \cal F}_k}(Z)\cap Z=\emptyset$, we argue by contradiction.
Actually, exactly one of the following conditions occurs:

\smallskip
(a) The first condition sets that $x\in X$, $y\in Y_1$ exist such that  $(x,y)\in\underline{{\rm \cal F}_k}$.
  Let $f$ be a word with maximum length in ${\rm F}(x)\cap {\rm F}(y)$: we have $y=ufv$, with $|u|+|v|\le d_{\rm F}(x,y)\le k$.
It follows from $y\in z_1\left(U_1z_1\right)^*$ and $|z_1|\ge k+1$  that we have  $u\in {\rm P}(z_1)$ and $v\in{\rm S}(z_1)$.
More precisely, according to the construction of $y$,  we have $u\in {\rm P}\left(a^{|z_0|}\right)$,  $v\in{\rm S}\left(b^{|z_0|}\right)$, and $u^{-1}y\in a^{|z_0|+1}A^*$, thus
$f\in A^*z_0A^*$. Since we have $f\in{\rm F}(x)$, we obtain  $z_0\in  {\rm F}(x)$, a contradiction with $z_0\notin {\rm F}(X^*)$.

\smallskip
(b) With the second condition, a pair of different words $y,y'\in Y_1$ exist such that $(y,y')\in{\rm \cal F}_k$. Let $f$ be  a word with maximum length in ${\rm F}(y)\cap {\rm F}(y')$. 
As indicated above, words  $u,u',v,v'$ exist such that $w=ufv$, $w'=u'fv'$, with $|u|+|u'|+|v|+|v'|=d_{\rm F}(w,w')\le k$.

\smallskip
~~~~~~(b1) At first,  assume that both the words $v,v'$ are different of $\varepsilon$. 
According to the construction of $Y_1$, since we have $v,v'\in {\rm S}(Y_1)$ with  $|v|+|v'|\le k$, a pair of positive integers $i,j$ exist such that $v=b^i$, $v'=b^j$.
This implies $fb\in {\rm F}(y)\cap {\rm F}(y')$, which contradicts the maximality of $|f|$.

\smallskip
~~~~~~(b2) As a consequence, at least one of the conditions $v=\varepsilon$, or $v'=\varepsilon$ holds. Without loss of generality, we assume $v'=\varepsilon$,
thus $f\in{\rm  S}(y')$. On a first hand, it follows from $z_1\in {\rm F}(y)\cap {\rm F}(y')$ that $|f|\ge |z_1|$:
since we have $f,z_1\in {\rm S}(y')$, this implies $f\in A^*z_1$. 
On the other hand, since we have  $z_1\in {\rm S}(y)$, $fv\in {\rm S}(y)$, and $|f|\ge |z_1|$, we obtain $z_1\in {\rm S}(fv)$.
This implies $fv\in A^*z_1v\cap A^*z_1$, thus $z_1v\in A^*z_1$. It follows from $|v|\le |z_1|-1$ and $z_1$ being  overlapping-free that $v=\varepsilon$.
Similar arguments applied to the prefixes of $y,y'$ lead to $u=u'=\varepsilon$:
we obtain $y=y'$, a contradiction with $(y,y')\in \underline{{\cal F}_k}$.

\smallskip
\noindent
In any case we obtain a contradiction, therefore $Z_1$ is $\underline{{\rm \cal F}_k}$-independent.
~~~~~~~~~~~~~~~~~~
$\Box$
\end{proof}
As a consequence, by merely substituting ${\rm\cal F}_k$ to ${\rm\cal P}_k$ in the proof of the propositions \ref{classic11-Pi-k} and \ref{c3_Pi},  we obtain the following statement:
\begin{proposition}
\label{classic-22-Phi-k}
Given a  regular code $X$, each of the  following properties holds:

{\rm (i)} $X$ is maximal as a $\underline{{\rm \cal F}_k}$-independent code if, and only if, we have $\mu(X)=1$.

{\rm (ii)} One  can decide whether $X$ satisfies Cond. \ref{3} wrt. ${\rm \cal F}_k$.
\end{proposition}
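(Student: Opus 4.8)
The plan is to transport the argument of Propositions~\ref{classic11-Pi-k} and~\ref{c3_Pi} verbatim to the factor metric, the only substitution being that the embedding result of Proposition~\ref{Com-Fact--Independent} now plays the role that Proposition~\ref{Comp-Pref-Independent} played for the prefix metric. Since that embedding lemma is already in hand, nothing here should require a new idea.

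For Property~(i) I would first treat the implication $\mu(X)=1 \Rightarrow$ maximality. By Theorem~\ref{classic}, $\mu(X)=1$ forces $X$ to be a maximal code in the whole family of codes; a maximal code admits no proper code extension whatsoever, so it is a fortiori maximal inside the subfamily of $\underline{{\cal F}_k}$-independent codes to which it belongs. For the converse I would argue by contrapositive: if $\mu(X)\ne 1$, then Theorem~\ref{classic} gives that $X$ is non-complete, and Proposition~\ref{Com-Fact--Independent} produces a complete — hence strictly larger — $\underline{{\cal F}_k}$-independent code containing $X$, so $X$ cannot be maximal as a $\underline{{\cal F}_k}$-independent code. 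Combining the two directions yields the stated equivalence.

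For Property~(ii) I would observe that for a regular set $X$ the value $\mu(X)$ is effectively computable through the associated rational series, exactly as remarked just before Proposition~\ref{c3_Pi}. By Property~(i), deciding Cond.~\ref{3} wrt.~${\cal F}_k$ then reduces to testing whether $\mu(X)=1$, and this test is therefore decidable.

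I do not expect any genuine obstacle at this stage: the entire difficulty has already been absorbed into Proposition~\ref{Com-Fact--Independent}, whose overlapping-free word $z_1=a^{|z|}bz$ was specifically engineered so that the factor-distance bookkeeping in cases (a) and (b) survives the $\mathrm{ER}$-style embedding. Once that regularity-preserving embedding is available, both parts follow by the same formal scheme as in the prefix case, with ${\cal F}_k$ and $\underline{{\cal F}_k}$ substituted for ${\cal P}_k$ and $\underline{{\cal P}_k}$ throughout.
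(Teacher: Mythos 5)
Your proposal is correct and follows exactly the paper's own route: the paper proves this proposition by substituting ${\cal F}_k$ for ${\cal P}_k$ in the proofs of Propositions \ref{classic11-Pi-k} and \ref{c3_Pi}, with Proposition \ref{Com-Fact--Independent} supplying the embedding in place of Proposition \ref{Comp-Pref-Independent}, precisely as you describe.
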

%
\section{Error detection in the topologies associated to (anti-)automorphisms}
\label{antiautomorphism}
Given an (anti-)automorphism $\theta$, we will examine Conds. \ref{1}--\ref{4} wrt. the quasi-metric $d_\theta$ that is, wrt. the relation $\tau_{d_\theta,1}=\widehat\theta$.
Recall that we have $\underline {\left(\widehat\theta\right)}=\underline\theta$. Regarding error correction, the following noticeable property holds:
\begin{proposition}
\label{1-2-equiv}
With respect to $\widehat\theta$, a  regular code $X\subseteq A^*$ satisfies Cond. \ref{1}  if, and only if, it satisfies Cond. \ref{2}. 
\end{proposition}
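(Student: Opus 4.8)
The plan is to reduce both conditions to one and the same combinatorial statement about pairs of distinct codewords linked by $\theta$, and then to exploit the symmetry of the existential quantifier over such pairs. First I would record the elementary fact that, since $\theta$ is a bijection of $A^*$, its reflexive closure acts on a single word by $\widehat\theta(w)=\{w,\theta(w)\}$, a set of at most two elements. This turns the hypothesis appearing in Cond. \ref{2} into a plain intersection of two two-element sets, which is what makes the whole argument short.

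Next I would analyse when Cond. \ref{2} fails, i.e. when there exist \emph{distinct} $x,y\in X$ with $\widehat\theta(x)\cap\widehat\theta(y)\neq\emptyset$. Expanding, the condition $\{x,\theta(x)\}\cap\{y,\theta(y)\}\neq\emptyset$ forces one of the equalities $x=y$, $\theta(x)=\theta(y)$, $\theta(x)=y$, or $x=\theta(y)$. Using the injectivity of $\theta$, the second equality collapses to $x=y$, which is excluded; hence for distinct $x,y$ the intersection is non-empty exactly when $y=\theta(x)$ or $x=\theta(y)$. Thus Cond. \ref{2} fails precisely when some pair of distinct codewords is linked by $\theta$ in one direction or the other.

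In parallel I would observe that Cond. \ref{1}, namely $\underline\theta(X)\cap X=\emptyset$, fails precisely when there exist distinct $x,y\in X$ with $y=\theta(x)$, since $\underline\theta$ is just the restriction of the graph of $\theta$ to pairs of distinct words. The remaining point is that the failure condition for \ref{2} is symmetric in the two codewords: a witness $x=\theta(y)$ for the failure of \ref{2} is, after renaming, a witness of the shape $y'=\theta(x')$ that falsifies \ref{1}. Therefore the two failure conditions coincide, and passing to contrapositives yields \ref{1} $\Leftrightarrow$ \ref{2}.

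The step demanding the most care is exactly this symmetrisation, because $\theta$ need not be an involution and $d_\theta$ need not be a symmetric quasi-metric, so $y=\theta(x)$ and $x=\theta(y)$ are genuinely different relations on ordered pairs; what rescues the argument is that both conditions are phrased through an existential quantifier over all pairs of distinct elements of $X$, which is insensitive to interchanging the two names, and I would make this explicit rather than silently invoking symmetry of $\theta$. I note finally that regularity of $X$ plays no role here: the equivalence holds for an arbitrary code, indeed for any subset of $A^*$.
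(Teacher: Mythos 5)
Your proof is correct and follows essentially the same route as the paper's: both arguments reduce everything to the two-element sets $\widehat\theta(w)=\{w,\theta(w)\}$, use injectivity of $\theta$ to eliminate the case $\theta(x)=\theta(y)$ with $x\neq y$, and observe that the remaining witnesses $y=\theta(x)$ or $x=\theta(y)$ for the failure of \ref{2} coincide, up to renaming, with a witness for the failure of \ref{1}. Your explicit remark that the symmetrisation rests on the existential quantifier rather than on $\theta$ being an involution is a welcome clarification of a point the paper leaves implicit, and your observation that regularity of $X$ is not needed is also accurate.
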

\begin{proof}
-- Firstly, assume  that $X$ is  $\underline\theta$-independent, and let $x,y\in X$ such that $\underline{\tau_{d_\theta,1}}(x)\cap \underline{\tau_{d_\theta,1}}(y)=\underline\theta(x)\cap\underline\theta(y)\ne\emptyset$. It follows from $\underline\theta(x)\cap\underline\theta(y)=\left(\{\theta(x)\}\setminus\{x\}\right)\cap\left(\{\theta(y)\}\setminus\{y\}\right)$ that $\theta(x)\ne x$, $\theta(y)\ne y$, and $\theta(x)=\theta(y)$: since $\theta$ is one-to-one, this implies $x=y$, therefore $X$ satisfies Cond. \ref{2}.

-- Secondly,  assume that Cond. \ref{1} does not hold wrt. $\hat\theta$ that is, $X\cap\underline{(\hat\theta)}(X)=X\cap\underline\theta(X)\ne\emptyset$.
Necessarily, a pair of different words $x,y\in X$ exist such that $y=\theta(x)$.
It follows from $\widehat\theta(x)=\{x\}\cup\{\theta(x)\}=\{x,y\}$ and  $\widehat\theta(y)=\{y\}\cup\{\theta(y)\}$ that $\widehat\theta(x)\cap \widehat\theta(y)\ne\emptyset$, hence 
Cond.  \ref{2} cannot hold.
~~~~~~~~~~~~~~~~~~~~~~~~~~~~~~~
$\Box$
\end{proof}
\begin{example}
\label{theta}
(1) Let  $A=\{a,b\}$ and $\theta$ be the automorphism defined by $\theta(a)=b$, and $\theta(b)=a$.
The regular prefix code $X=\{a^nb: n\ge 0\}$ satisfies Conds. \ref{1}. Indeed, we have $\underline\theta(X)=\{b^na: n\ge 0\}$, thus $\underline\theta(X)\cap X=\emptyset$.
According to Proposition \ref{1-2-equiv}, it also satisfies Cond.  \ref{2}.
We have $\mu(X)=\frac{1}{2}\sum_{n\ge 0}\left(\frac{1}{2}\right)^n=1$, whence $X$ is a maximal prefix code.
Consequently $X$ is maximal in the family of $\underline\theta$-independent codes (Cond. \ref{3}).  Finally, we have $X\subsetneq\hat\theta(X)=\{a^nb: n\ge 0\}\cup \{b^na: n\ge0\}$, hence $\hat\theta(X)$ cannot be  a code (we verify that $a,b,ab\in \hat\theta(X)$).

\smallbreak
(2) Over the alphabet  $A=\{a,b\}$,  take for $\theta$ the anti-automorphism defined by $\theta(a)=b$, and $\theta(b)=a$.
The regular prefix code $X=\{a^nb: n\ge 1\}$ satisfies Conds. \ref{1}, \ref{2}: indeed, we have $\underline\theta(X)=\{ab^n: n\ne 1\}$, thus $\underline\theta(X)\cap X=\emptyset$.
As indicated above, $X$ is a maximal prefix code, thus it satisfies Cond.\ref{3}.
At last, it follows from $X\subsetneq \widehat\theta(X)$ that  
$X$ cannot satisfies Cond. \ref{4}. 

\smallbreak
(3) With the condition above,
 consider $Y=X\setminus \{b,ab\}=\{a^nb: n\ge 2\}$. We have $\underline\theta(Y)\cap Y=\emptyset$, hence $Y$ satisfies Conds. \ref{1}, \ref{2}. However, by construction, $Y$ cannot satisfy Cond. \ref{3}.
Finally, we have $\hat\theta(Y)=\bigcup_{n\ge 2}\{a^nb, ab^n\}$, which remains a prefix code that is, Y satisfies Cond. \ref{4}.

\smallbreak
 (4) Over the alphabet $\{A,C,G,T\}$, let  $\theta$ denotes  the Watson-Crick anti-automorphism (see eg. \cite{KKK14,K83}),
which is defined by  $\theta(A)=T$, $\theta(T)=A, \theta(C)=G$, and $\theta(G)=C$.\\
Consider the prefix code $Z=\{A,C,GA,G^2,GT,GCA,GC^2,GCG,GCT\}$.
We have $\theta(Z)=\{T,G,TC,C^2,AC,TGC,G^2C, CGC,AGC\}$, hence $Z$ satisfies Conds. \ref{1}, \ref{2}.
By making use of the uniform distribution, we have $\mu(Z)=1/2+ 3/16+1/16=3/4$, hence $Z$ cannot satisfy Cond. \ref{3}.
At last,  it follows from $G,G^2\in\widehat\theta (Z)=\theta(Z)\cup Z$ that  Cond. \ref{4} is not  satisfied.

\smallbreak
(5) Notice that, in each of the preceding examples, since the (anti-)automorphism $\theta$ satisfies $\theta^2=id_{A^*}$, actually the quasimetric $d_\theta$ is a metric.\\
Of course, (anti-)automorphisms exist in such a way that $d_\theta$ is only a quasi-metric.
For instance over $A=\{a,b,c\}$, taking for $\theta$  the automorphism generated by the cycle $(a,b,c)$, we obtain $d_\theta(a,b)=1$ and $d_\theta(b,a)=2$ 
(we have $b=\theta(a)$ and $a\ne\theta(b)$).
\end{example}
\begin{figure}
\begin{center}
\label{Figure2}
\includegraphics[width=3cm,height=1.9cm]{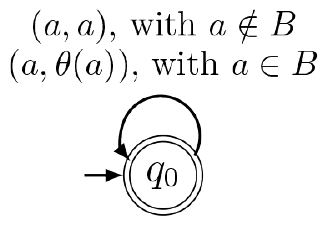}
\end{center}
\caption[]{\small An automaton with behavior $\tau_{d_{\theta,1}}=\hat\theta$, in the case where  $\theta$ is an automorphism:  $a$ represents every character in $A$ and we set $B=\{a\in A: \theta(a)\ne a\}$.}
\end{figure}
Regarding regulary and recognizability of relations we state:
\begin{proposition}
\label{theta-X-regular}
With the preceding  notation, each of the following properties holds:

(i) If $\theta$  is an  automorphism, then $\tau_{d_{\theta,1}}=\hat\theta$ and $\underline{\tau_{d_{\theta,1}}}=\underline\theta$  are non-recognizable regular relations.

(ii) If $\theta$ is an anti-automorphism, then it cannot be a regular relation.

(iii) Given an (anti-)automorphism $\theta$, if $X$ is a regular subset of $A^*$, then the same holds for $\hat\theta(X)$.
\end{proposition}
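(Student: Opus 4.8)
The plan is to treat each of the three claims separately, since they concern different structural properties of the relations. Throughout I would work over $M=N=A^*$ and rely on the closure properties recalled in the preliminaries: regular relations are closed under inverse and composition, recognizable relations form a subfamily closed under complement and intersection, and the intersection of a recognizable relation with a regular one is regular. I would also use freely that $id_{A^*}$ is regular but not recognizable, and that for a regular relation $\tau$ and a regular set $X$ the images $\tau(X)$ and $\tau^{-1}(X)$ are regular.

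For Property (i), suppose $\theta$ is an automorphism. A monoid automorphism of $A^*$ is determined by a permutation of the alphabet, so $\theta$ can be computed letter-by-letter. First I would exhibit a finite transducer whose behavior is $\theta$: a single state, initial and terminal, reading each $a\in A$ and outputting $\theta(a)$; this shows $\theta$, and hence $\widehat\theta=\theta\cup id_{A^*}$ and $\underline\theta=\theta\cap\overline{id_{A^*}}$, are regular (for $\underline\theta$ one modifies the automaton in the spirit of Figure~5, forcing at least one letter to be genuinely changed via the set $B=\{a\in A:\theta(a)\ne a\}$). For non-recognizability I would argue by contradiction using a pumping-type argument analogous to the proof of Proposition~\ref{Phi-not-recogn}: if $\widehat\theta=\bigcup_{i\in I}T_i\times U_i$ with $I$ finite, then since $\theta$ maps the infinitely many words $a^n$ ($n\ge 0$) to distinct images $\theta(a)^n$, two indices $m\ne n$ must fall into the same block $T_i\times U_i$, forcing $(a^m,\theta(a)^n)\in\widehat\theta$, which is false whenever $m\ne n$. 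The same argument applies verbatim to $\underline\theta$.

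For Property (ii), suppose $\theta$ is an anti-automorphism, so $\theta(uv)=\theta(v)\theta(u)$ and in particular $\theta$ acts as a reversal composed with a letter-permutation. I would argue by contradiction: if $\theta$ were regular, then since reversal is itself realized by the regular (indeed length-preserving) relation and regular relations are closed under composition, one could compose $\theta$ with the reversal relation to obtain a regular relation; but more directly I would use the reversal itself. The cleanest obstruction is to test $\theta$ against a recognizable set: since regular relations preserve regularity, for any regular $X$ the image $\theta(X)$ would be regular. I would instead adapt the fooling/pumping argument by feeding the family $a^m b a^n$ and observing that $\theta$ sends prefixes to suffixes, which no finite-state transducer reading left-to-right can realize (a transducer can only produce bounded-delay outputs). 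The point is that $\theta(a^n b)=\theta(b)\theta(a)^n$ places the image of the \emph{last} input letter at the \emph{front} of the output, and iterating this forces unbounded look-ahead, contradicting the bounded-delay characterization of regular relations.

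For Property (iii), I would simply invoke the earlier theorem (\cite[Theorem IV.1.3]{S03}) that a regular relation applied to a regular set yields a regular set: for an automorphism $\theta$ this is immediate from Property~(i). The only subtlety is the anti-automorphism case, where by Property~(ii) the relation $\theta$ is not regular, so that theorem does not apply directly. Here I would instead use the reversal operation explicitly: write $\theta(X)=\bigl(\sigma(X^R)\bigr)$ where $\sigma$ is the underlying letter-permutation automorphism and $X^R=\{w^R:w\in X\}$ is the reversal of $X$. Since the class of regular languages is closed under reversal, $X^R$ is regular; since $\sigma$ is an automorphism, $\sigma(X^R)$ is regular by Property~(i) and the image theorem; and $\theta(X)=\sigma(X^R)$, so $\theta(X)$ is regular, whence $\widehat\theta(X)=\theta(X)\cup X$ is regular. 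The main obstacle I anticipate is Property~(ii): one must make precise why an anti-automorphism cannot be a regular word relation, and the bounded-delay (or fooling-set) argument is the delicate point that requires care, whereas (i) and (iii) are essentially bookkeeping with the closure properties already established.
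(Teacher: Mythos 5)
Your parts (i) and (iii) follow essentially the same route as the paper: the one-state transducer for $\theta$ (augmented with loops $(a,a)$ for $\hat\theta$, restricted via $B=\{a\in A:\theta(a)\ne a\}$ for $\underline\theta$), the fooling argument on the pairs $\left(a^n,\theta(a)^n\right)$ against a finite union $\bigcup_{i\in I}T_i\times U_i$, and, in the anti-automorphism case of (iii), the decomposition $\theta(X)=\sigma(X^R)$ together with closure of regular languages under reversal. (One shared caveat: for $\underline\theta$ the fooling argument needs a letter $a$ with $\theta(a)\ne a$, so the non-recognizability of $\underline\theta$ tacitly assumes $\theta\ne id_{A^*}$; the paper makes the same implicit assumption.)

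The genuine gap is in (ii). Your argument rests on the claim that ``a transducer can only produce bounded-delay outputs'' and on a ``bounded-delay characterization of regular relations.'' No such characterization exists: regular (rational) relations may have unbounded delay --- already the recognizable relation $a^*\times b^*$, or $\{(a^n,\varepsilon):n\ge 0\}$, pairs inputs and outputs of arbitrarily discrepant lengths, and a transducer may emit output along $\varepsilon$-input moves. So the obstruction you invoke does not contradict regularity. The other routes sketched in that paragraph also fail: testing $\theta$ on a regular set $X$ cannot produce a contradiction, since by your own part (iii) the image $\theta(X)$ \emph{is} regular; and the phrase ``reversal is itself realized by the regular relation'' asserts exactly what has to be disproved. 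The paper instead writes $\theta=t\cdot h$, where $t$ is the transposition $w\mapsto w^R$ and $h$ is the automorphism extending $\theta$ on $A$, and invokes the known fact that $t$ is not a regular relation \cite[Example IV.1.10]{S03}. The clean way to transfer this to $\theta$ is: if $\theta$ were regular, then so would be $\theta\cdot h^{-1}=t$, since regular relations are closed under composition and $h^{-1}$ is regular --- a contradiction. You need either this reduction or an actual proof of the non-rationality of transposition (for instance, intersecting with the recognizable relation $a^*ba^*\times a^*ba^*$ and applying the iteration property to $\{(a^nba^m,a^mba^n):n,m\ge 0\}$); the bounded-delay claim cannot be repaired as stated.
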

\begin{proof} Let $\theta$ be an (anti-)automorphism onto $A^*$.

\smallskip
 (i) In the case where  $\theta$  is an  automorphism of $A^*$, it is regular: indeed, trivially $\theta$ is the behavior of a one-state automaton with  transitions $\left(0,\left(a,\theta(a)\right),0\right)$, for all $a\in A$. 
Starting with this automaton we obtain a finite automaton with behaviour $\hat\theta$  by merely adding  the transitions $\left(0,\left(a,a\right),0\right)$, for all $a\in A$ (see  Figure 6): consequently the relation $\hat\theta$ is regular.

By contradiction, assume $\theta$ recognizable. As in the proof of Proposition \ref{Phi-not-recogn}, a finite set $I$ exists such that  $\hat\theta=\bigcup_{i\in I}(T_i\times U_i)$. 
Since $I$ is finite, there are  $i\in I$, $a\in A$, and $m,n\ge 1$  such that $\left(a^n,\left(\hat\theta(a)\right)^n\right),\left(a^m,\left(\hat\theta(a)\right)^m\right)\in T_i\times U_i$, with $m\ne n$.
This implies $\left(a^n,\left(\hat\theta(a)\right)^m\right)\in T_i\times U_i$ 
that is,  $\hat\theta(a^n)=(\hat\theta(a))^m$. If we have $\hat\theta (a)=a$, we obtain $a^n=a^m$. Otherwise, we have $\hat\theta(a)=\theta(a)$, thus $\theta(a^n)=(\theta(a))^m$. 
In each case, this contradicts $m\ne n$.

Finally, the binary relation $\underline{(\hat\theta)}=\underline\theta$ is the set of all the pairs $(uas,ubs')$ that satisfy both  the three following conditions: 

(1) $u\in A^*$;

(2)  $b=\theta(a)$, with $a,b\in A$ and $a\ne b$;

(3) $s'=\theta(s)$, with $s\in A^*$.

\smallbreak
Consequently, $\underline\theta$ is the  behaviour of the automaton in Figure 6, hence it is a regular relation.
In addition, by merely substituting the relation $\underline\theta$ to $\hat\theta$ in the argument above, it can be easily prove that $\underline\theta$ recognizable implies $\theta(a^n)=(\theta(a))^m$, for some $a\in A$ and $m\ne n$: a contradiction with $\theta$ being a free monoid automorphism.

\smallbreak
(ii) For every anti-automorphism $\theta$, the relation $\hat\theta$ is the result of the composition of the so-called {\it transposition},
 namely $t:w\rightarrow w^R$,  by some automorphism of $A^*$, say $h$.
As shown in \cite [Example IV.1.10]{S03}, the transposition is not a  regular relation.
Actually, the same argument can be applied  for proving that  the resulting relation $\hat\theta$ is non-regular.

\smallbreak
(iii) Let $X$ be a regular subset of $A^*$. If $\theta$ is an automorphism, the relation $\hat\theta=\theta\cup id_{A^*}$ is a regular relation, hence $\hat\theta(X)$ is regular.
In the case where $\theta$ is an anti-automorphism, with the preceding notation, although the transposition is not a regular relation, the set $t(X)$ itself is regular  (see eg. \cite[Proposition I.1.1] {S03}). Consequently $\hat\theta(X)=h\left(t(X)\right)$ is also regular. 
~~~~~~~~~~~~~~~~~~~~~~~~~~~~~~~~~~~~$\Box$
\end{proof}
\begin{figure}
\begin{center}
\includegraphics[width=6.5cm,height=1.6cm]{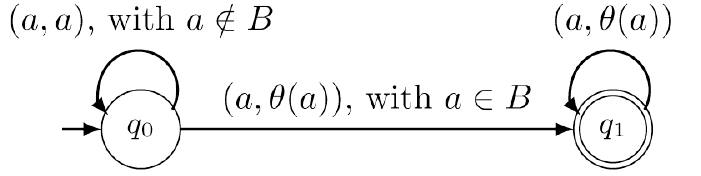}
\end{center}
\caption[]{\small With the notation in Example 6, an automaton with behavior $\underline\theta$, in the case where $\theta$ is an automorphism.}
\end{figure}
As  a consequence of Proposition \ref{theta-X-regular}, we obtain the following result:
\begin{proposition}
Given an (anti-)automorphism $\theta$ onto $A^*$ and a regular code $X\subseteq A^*$, each of the two following properties holds:

(i) In any case, $X$ satisfies Cond. \ref{1}, \ref{2}.

(ii) It can be decided whether $X$ satisfies Cond. \ref{4}.
\end{proposition}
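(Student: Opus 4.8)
The plan is to derive both assertions from Propositions~\ref{1-2-equiv} and~\ref{theta-X-regular}. For (ii), I would first recall that, $d_\theta$ being a quasi-metric, the relation $\widehat\theta=\tau_{d_\theta,1}$ is reflexive, so that Cond.~\ref{4} is equivalent to $\widehat\theta(X)$ being a code. According to Proposition~\ref{theta-X-regular}(iii), $\widehat\theta(X)$ is a regular set; hence one may run the Sardinas and Patterson algorithm on it and thereby decide whether it is a code. This settles (ii) uniformly, whether $\theta$ is an automorphism or an anti-automorphism.

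For (i), Proposition~\ref{1-2-equiv} reduces the problem to deciding Cond.~\ref{1}, that is, whether $\underline\theta(X)\cap X=\emptyset$, and I would distinguish two cases. If $\theta$ is an automorphism, then by Proposition~\ref{theta-X-regular}(i) the relation $\underline\theta$ is regular; consequently $\underline\theta(X)$ is regular, $\underline\theta(X)\cap X$ is regular, and its emptiness is decidable.

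The main obstacle is the anti-automorphism case, where $\underline\theta$ is no longer regular (Proposition~\ref{theta-X-regular}(ii)), so the preceding argument breaks down. Writing $\theta=t\cdot h$ with $t$ the reversal and $h$ the underlying automorphism, I would observe that $\theta^{-1}(X)=h^{-1}(X^R)$ is regular, since reversal and $h^{-1}$ preserve regularity; hence $L=X\cap\theta^{-1}(X)=\{x\in X:\theta(x)\in X\}$ is a regular, effectively computable set. A pair witnessing the failure of Cond.~\ref{1} is precisely an $x\in L$ with $\theta(x)\neq x$, so $X$ is $\underline\theta$-independent if and only if $L\subseteq\mathrm{Fix}(\theta)$, where $\mathrm{Fix}(\theta)=\{w:h(w^R)=w\}$. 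The delicate point is that, for an anti-automorphism, $\mathrm{Fix}(\theta)$ is the set of $h$-palindromes, which is not regular, so one cannot merely intersect it with $L$. I would overcome this by noting that its complement is context-free --- a pushdown automaton can guess a position together with its mirror position and check that the two corresponding letters disagree after applying $h$ --- so that $L\cap\overline{\mathrm{Fix}(\theta)}$ is context-free and its emptiness is decidable; this is exactly the test $L\subseteq\mathrm{Fix}(\theta)$. Combining the two cases decides Cond.~\ref{1}, and through Proposition~\ref{1-2-equiv} also Cond.~\ref{2}.
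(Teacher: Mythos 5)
Your treatment of (ii) is exactly the paper's: $\widehat\theta(X)$ is regular by Proposition~\ref{theta-X-regular}(iii), so the Sardinas and Patterson algorithm applies. The divergence is in (i), and it is substantive: item (i) is not a decidability statement. It asserts that every regular code $X$ \emph{does} satisfy Conds.~\ref{1} and~\ref{2} wrt.\ $\widehat\theta$, unconditionally. The paper's proof is a two-line injectivity argument: for every $x\in X$ one has $\theta(x)\ne\emptyset$ and $\theta^{-1}\left(\theta(x)\right)\cap X=\{x\}$ because $\theta$ is one-to-one, which gives Cond.~\ref{2}, and Cond.~\ref{1} then follows from Proposition~\ref{1-2-equiv}. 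What you prove instead is that one can \emph{decide} whether $X$ satisfies Cond.~\ref{1} (hence, by Proposition~\ref{1-2-equiv}, Cond.~\ref{2}). As a proof of the proposition as stated that is the wrong target, so the argument does not establish the claim no matter how sound its individual steps are.

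That said, your instinct points at a real delicacy. If Cond.~\ref{2} is read literally with $\tau=\widehat\theta=\tau_{d_\theta,1}$, as the section announces, then for the letter-exchange automorphism on $\{a,b\}$ the uniform code $X=\{ab,ba\}$ has $\theta(ab)=ba\in X$ with $ba\ne ab$, so $\underline\theta(X)\cap X\ne\emptyset$ and both conditions fail; the unconditional claim only goes through under the paper's tacit reading of the correction condition through $\theta$ rather than $\widehat\theta$. Your decision procedure is, taken on its own terms, correct: the automorphism case via regularity of $\underline\theta$, and the anti-automorphism case via the observations that $L=X\cap\theta^{-1}(X)=\{x\in X:\theta(x)\in X\}$ is an effectively computable regular set, that Cond.~\ref{1} amounts to $L\subseteq\mathrm{Fix}(\theta)$, and that the complement of the set of $\theta$-palindromes is context-free, so that emptiness of $L\cap\overline{\mathrm{Fix}(\theta)}$ is decidable. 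This is a genuinely nontrivial argument precisely where $\underline\theta$ fails to be regular and the naive intersection test breaks down. But it answers a different question from the one the proposition poses, and you should say so explicitly rather than silently substituting decidability for unconditional satisfaction.
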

\begin{proof}
(i) Firstly, since $\theta$ is an (anti-)automorphism of $A^*$, it is a  one-to-one mapping.
For every $x\in X$, we have $\theta(x)\ne\emptyset$, moreover we have $\theta^{-1}\left(\theta(x)\right)\cap X=\{x\}$, consequently, in any case $X$ satisfies Cond \ref{2}.
Secondly, according to Proposition \ref{1-2-equiv}, necessarily $X$ also satisfies Cond. \ref{1}.

\smallbreak
\noindent
(ii) According to Proposition \ref{theta-X-regular}, in any case the set $\theta(X)$  is regular:  by applying Sardinas and Patterson algorithm, one can  decide whether the code $X$ satisfies Cond. \ref{4}.
~~~~~~~~~~~~~~~~~~~~~~~~~~~~~~~~~~~~~~~~~~~~~~~~~~~~~~~~~~~~~~~~~~~~~~~~~~~~~~~~~~~~~~~~~~~~~$\Box$
\end{proof}
It remains to study the behavior of regular codes with regard to Cond. \ref{3}:
\begin{proposition}
\label{Embed-theta}
Every non-complete regular $\underline\theta$-independent code can be embedded into some complete one.
\end{proposition}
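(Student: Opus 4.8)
The plan is to mirror the embedding strategy already used for the prefix and factor metrics (Propositions~\ref{Comp-Pref-Independent} and~\ref{Com-Fact--Independent}), adapting it to the relation $\underline\theta$. Since $X$ is non-complete, I fix a word $z_0\notin{\rm F}(X^*)$ and build from it an overlapping-free word $z$ exactly as before; then I set $U=A^*\setminus(X^*\cup A^*zA^*)$, $Y=z(Uz)^*$, and $Z=X\cup Y$. By Theorem~\ref{EhRz}, $Z$ is a complete code, and if $X$ is regular then so is $Z$. The whole task reduces to proving that $Z$ is $\underline\theta$-independent, i.e. that $\underline\theta(Z)\cap Z=\emptyset$, which I would establish by contradiction, splitting into the case $(x,y)\in\underline\theta$ with $x\in X$, $y\in Y$ and the case $(y,y')\in\underline\theta$ with $y,y'\in Y$.

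The essential structural fact I would exploit is that $(w,w')\in\underline\theta$ means precisely $w'=\theta(w)$ with $w'\ne w$; because $\theta$ is an (anti-)automorphism it preserves length, so $|w'|=|w|$, and every factor of $w'$ is the $\theta$-image of a factor of $w$ (reversed, in the anti-automorphism case). Thus $z_0\notin{\rm F}(X^*)$ should be chosen together with $\theta(z_0)$ (or its reversal) so that neither appears as a factor of any codeword of $X^*$; concretely I would replace $z_0$ by a word that witnesses non-completeness and absorbs the $\theta$-image, ensuring that if $\theta(x)\in Y$ then the suffix block of the $Y$-word forces $z_0$ (or $\theta^{-1}$ of it) to be a factor of $x$, contradicting $z_0\notin{\rm F}(X^*)$. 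For the case $y,y'\in Y$ with $\theta(y)=y'$, the overlapping-freeness of $z$ and the rigid $z(Uz)^*$ structure should, as in the earlier proofs, force $y=y'$, contradicting anti-reflexivity of $\underline\theta$.

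The main obstacle is the anti-automorphism case, where $\theta$ reverses order: here $\theta(y)$ has the block structure of $Y$ read backwards, so $z$ being overlapping-free is not by itself enough—one must control the reversal $z^R$ as well. This is exactly the difficulty already met in Proposition~\ref{Com-Fact--Independent}, and the fix there was to build a word $z_1$ whose reversal is also overlapping-free. I would therefore take the same reinforced word, namely $z_1=a^{|z|}bz$ with $z=z_0ab^{|z_0|}$, so that both $z_1$ and $z_1^R$ are overlapping-free, and run the argument with $Y_1=z_1(U_1z_1)^*$, $Z_1=X\cup Y_1$. With $z_1$ chosen so that neither $z_0$ nor its $\theta$-image occurs in ${\rm F}(X^*)$, the two contradiction cases close: a length-preserving $\theta$ cannot map a word of $Y_1$ onto a word of $X$ without planting $z_0$ inside a codeword, and it cannot map one $Y_1$-word onto another without forcing equality via the overlapping-freeness of $z_1$ and $z_1^R$. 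Hence $Z_1$ is complete, regular, and $\underline\theta$-independent, so $X$ is not maximal among $\underline\theta$-independent codes.
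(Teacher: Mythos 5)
There is a genuine gap. Your skeleton (Theorem~\ref{EhRz}, then a case analysis showing $\underline\theta(Z)\cap Z=\emptyset$) matches the paper's, and you correctly locate the difficulty in the anti-automorphism case, but you do not supply the idea that actually closes the $X$-versus-$Y$ cases. If $x\in X$ and $y\in Y$ with $x=\theta(y)$ (or $y=\theta(x)$), all you can extract from $z_0\in{\rm F}(y)$ is that $\theta(z_0)$ (or $\theta^{-1}(z_0)$) is a factor of $x$; to contradict $z_0\notin{\rm F}(X^*)$ you would need that image to also avoid ${\rm F}(X^*)$, and you simply posit that $z_0$ can be ``chosen together with $\theta(z_0)$'' so that neither is a factor of $X^*$. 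That is not something you can arrange by fiat --- non-completeness only hands you one word outside ${\rm F}(X^*)$, and its $\theta$-image may well lie inside. The paper's solution is to observe that $\theta$ has finite order $n$ as a permutation ($\theta^n=id_{A^*}$) and to plant the \emph{entire orbit} $z_0\theta(z_0)\cdots\theta^{n-1}(z_0)$ inside the new marker word $z_2=z_0\theta(z_0)\cdots\theta^{n-1}(z_0)ab^{n|z_0|}$; then one application of $\theta$ or $\theta^{-1}$ to any word containing $z_2$ necessarily produces a word containing $z_0$ itself, and the contradiction with $z_0\notin{\rm F}(X^*)$ follows. This orbit construction is the heart of the proof and is absent from your proposal; the reinforced word $z_1=a^{|z|}bz$ you import from Proposition~\ref{Com-Fact--Independent} does not substitute for it.

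Your treatment of the case $y,y'\in Y$ with $y'=\theta(y)$ also rests on the wrong mechanism. Overlapping-freeness of $z_1$ and $z_1^R$ does not ``force $y=y'$'': since $\theta$ reverses concatenation, $\theta(y)$ \emph{begins} with the image of the \emph{suffix} of $y$, so the relevant obstruction is a mismatch of prefixes, not an overlap of markers. The paper normalizes $z_0$ so that $|z_0|\ge 2$ and $z_0\notin aa^*$ for every letter $a$, and ends $z_2$ with $ab^{n|z_0|}$; then $\theta(y)$ starts with the power $(\theta(b))^{n|z_0|}$ of a single letter, while every word of $Y_2$ starts with $z_0$, which is not such a power --- contradiction. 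Your $z_1$ happens to have a shape for which one could run a similar first-letter argument ($z_1=\theta(z_1)$ would force $h(a)=h(b)$ for the underlying letter permutation $h$), but you neither state the needed normalization of $z_0$ nor give that argument, and the property you do invoke would not deliver it.
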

\begin{proof}
According to Theorem \ref{EhRz}, the result holds if $\theta$ is an automorphism: indeed the action of such a transformation merely consists of  rewriting words by applying some permutation of $A$.

Now, we assume that $\theta$ is  an anti-automorphism. Classically, some positive integer $n$, the {\it order} of the permutation $\theta$,  exists such $\theta^{n}=id_{A^*}$. 
As in the propositions \ref{Comp-Pref-Independent} and \ref{Com-Fact--Independent}, in view of Theorem \ref{EhRz}, we construct a convenient word in $A^*\setminus {\rm F}(X^*)$.

Let $z_0\notin {\rm F}(X^*)$, $a$ be its initial character, and $b$ be a character different of $a$. Without loss of generality,  we assume $|z_0|\ge 2$ and $z_0\notin aa^*$, for every $a\in A$ (otherwise,  substitute $z_0b$ to $z_0$).
By definition, for every integer $i$, we have $\left|\theta^i(z_0)\right|=\left|\theta(z_0)\right|$, therefore it follows from $z_0\theta(z_0)\cdots\theta^{n-1}(z_0)\in A^*\setminus F(X^*)$ that
 $z_2=z_0\theta(z_0)\cdots\theta^{n-1}(z_0)ab^{n|z_0|}$ is an overlapping-free word in $A^*\setminus {\rm F}(X^*)$.
Set $U_2=A^*\setminus \left(X^*\cup A^*z_2A^*\right)$, $Y_2=(z_2U_2)^*z_2$, and $Z_2=X\cup Y_2$.

\smallskip
According to  Theorem \ref{EhRz}, the set $Z_2=X\cup Y_2$ is a complete regular code. 
In order to prove that  it is $\underline\theta$-independent, we argue by contradiction. Actually, assuming that $\underline\theta(Z)\cap Z\ne \emptyset$, exactly one of the three following cases occurs:

\smallskip
(a)  The first condition sets that $x\in X$ exists such that $\underline\theta (x)\in Y_2$.
 By construction, the sets $X$ and $Y_2$ are disjoint (we have $z_2\not\in F(X^*)$), therefore we have $\theta(x)\ne x$: this implies $\underline\theta (x)=\theta(x)$.
According to the definition of $Y_2$, we have $z_2\in {\rm F}\left(\theta(x)\right)$, thus  $\theta(z_0)\in {\rm F}\left(\theta(x)\right)$. 
It follows from $x=\theta^{n-1}\left(\theta(x)\right)$,  that the word $z_0=\theta^{n-1}\left( \theta(z_0\right))$ is a factor of $x$, a contradiction with $z_0\notin {\rm F}(X^*)$.

\smallskip
(b) With the second condition some  pair of words $x\in X$, $y\in Y_2$ exist such that 
$x=\theta(y)$.
It follows from $\theta^{n-1}(z_0)\in {\rm F}(z_2)\subseteq {\rm F}(y)$  that $z_0=\theta\left(\theta^{n-1}(z_0)\right)\in {\rm F}\left(\theta(y)\right)={\rm F}\left(x\right)$: once more this contradicts
$z_0\notin {\rm F}(X^*)$.
\smallskip

(c)  The third condition sets that there are  different words $y, y'\in Y_2$ such that $y'=\theta(y)$. Since $\theta$ is an anti-automorphism,  $ab^{n|z_0|}\in {\rm S}(y)$ implies $\theta\left(b^{n|z_0|}\right)\in {\rm P}(y')$.
Since we have $z_0\in {\rm P}(Y_2)$, this implies
$\left(\theta\left(b\right)\right)^{|z_0|}=z_0$. But we have  $|z_0|\ge 2$, and $\theta(b)\in A$: this is incompatible with the construction of $z_0\not\in aa^*$.

\smallskip
\noindent
In any case we obtain a contradiction, therefore $Z_1$ is $\underline{\theta}$-independent.
~~~~~~~~~~~~~~~~~~
$\Box$
\end{proof}
As a consequence,  given a regular code $X\subseteq A^*$, 
$X$ is maximal in the family of $\underline\theta$-independent  codes of $A^*$ if, and only if, the equation $\mu(X)=1$ holds.
In other words,  
one can decide whether  $X$ satisfies Cond. \ref{3}.

Finally, the following statement synthesizes the results of the whole study we relate in our paper:

\begin{theorem}
\label{decidable}
With the preceding notation, given a regular code $X$
 one can decide whether  it satisfies each of the following conditions:

{\rm (i)}  Conds.  \ref{1}--\ref{4} wrt.   ${\cal P}_k$.

{\rm (ii)}   Conds.  \ref{3}, \ref{4} wrt.  ${\cal F}_k$.

{\rm (iii)} In the case where $X$ is finite, Conds. \ref{1}, \ref{2} wrt.  ${\cal F}_k$.

{\rm (iv)}  Conds. \ref{3}, \ref{4} wrt. $\widehat \theta$, for any (anti-)automorphism  $\theta$  of $A^*$.
In that framework,  in any case $X$ satisfies Conds. (c1), (c2), which  are  actually equivalent.

\end{theorem}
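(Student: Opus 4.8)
The plan is to assemble Theorem~\ref{decidable} as a pure synthesis of the propositions established in the three preceding sections, since each clause restates, for a specific quasi-metric, decidability facts already proved in isolation. No new mathematical content is required; the work consists of citing the right intermediate result for each of the four conditions under each metric, so I would organize the proof metric-by-metric following the section structure.

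For clause~(i), concerning $\mathcal{P}_k$, I would invoke Proposition~\ref{decid-Pi1}, which already handles Conds.~\ref{1}, \ref{2}, and~\ref{4} via the regularity of $\underline{\mathcal{P}_k}$ (Proposition~\ref{Pi-regular}) together with Lemma~\ref{Pi2}, and then Proposition~\ref{c3_Pi} for Cond.~\ref{3}. For clause~(ii), concerning $\mathcal{F}_k$, Cond.~\ref{4} is settled by Proposition~\ref{Phi-regular}(ii) and Cond.~\ref{3} by Proposition~\ref{classic-22-Phi-k}(ii); the point worth stressing is that Conds.~\ref{1}, \ref{2} are deliberately \emph{omitted} here for non-finite codes, because Proposition~\ref{Phi-not-recogn} shows $\mathcal{F}_k$ is non-recognizable and no finite automaton for $\underline{\mathcal{F}_k}$ is presently known. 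Clause~(iii) closes exactly that gap in the finite case: it is Proposition~\ref{Phi-regular}(iii), where finiteness of $X$ forces $\mathcal{F}_k(X)$ and $\underline{\mathcal{F}_k}(X)$ to be finite, hence effectively computable.

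For clause~(iv), concerning $\widehat\theta$, I would first cite Proposition~\ref{1-2-equiv} and the proposition immediately preceding the theorem to establish that every regular code automatically satisfies Conds.~\ref{1} and~\ref{2}, and that these are equivalent here (a consequence of $\theta$ being one-to-one, so that $\theta^{-1}(\theta(x))\cap X=\{x\}$). Cond.~\ref{4} follows from Proposition~\ref{theta-X-regular}(iii), which guarantees regularity of $\widehat\theta(X)$ even in the anti-automorphism case where $\widehat\theta$ itself is non-regular, so that Sardinas--Patterson applies to $\widehat\theta(X)={\tau_{d_\theta,1}}(X)$. Cond.~\ref{3} reduces to testing $\mu(X)=1$ by Proposition~\ref{Embed-theta} combined with Theorem~\ref{classic}, paralleling Proposition~\ref{classic11-Pi-k}.

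There is essentially no obstacle remaining, since every ingredient has been proved; the only delicate point is bookkeeping, namely making sure I do not over-claim for $\mathcal{F}_k$ by inadvertently asserting decidability of Conds.~\ref{1}, \ref{2} for infinite regular codes, which is precisely the case left open by Proposition~\ref{Phi-not-recogn}. Accordingly the proof is short: for each clause I name the governing proposition, note for $\widehat\theta$ that regularity of $\theta$ fails for anti-automorphisms (Proposition~\ref{theta-X-regular}(ii)) yet regularity of the \emph{image} $\widehat\theta(X)$ survives, and observe that $\mu(X)$ is computable for regular $X$ via a rational series, which makes the maximality tests of Conds.~\ref{3} effective throughout.
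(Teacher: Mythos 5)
Your proposal is correct and follows exactly the paper's own route: Theorem~\ref{decidable} is presented there as a pure synthesis, with no separate argument, of Propositions~\ref{decid-Pi1} and~\ref{c3_Pi} for ${\cal P}_k$, Propositions~\ref{Phi-regular} and~\ref{classic-22-Phi-k} for ${\cal F}_k$, and Propositions~\ref{1-2-equiv}, \ref{theta-X-regular}, \ref{Embed-theta} together with the unnumbered proposition on Conds.~\ref{1}, \ref{2}, \ref{4} for $\widehat\theta$. Your bookkeeping, in particular the care not to over-claim decidability of Conds.~\ref{1}, \ref{2} for infinite regular codes wrt.\ ${\cal F}_k$, matches the paper precisely.
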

\section{Concluding remark}
\label{conclusion}
With  each pair of words $(w,w')$, the  so-called {\it subsequence metric} associates the integer $\delta(w,w')=|w|+|w'|-2s(w,w')$, where $s(w,w')$ stands for a maximum length common subsequence of $w$ and $w'$.
Equivalently, $\delta(w,w')$ is the minimum number of  one character insertions and  deletions that have to be applied   for computing $w'$ by starting from $w$. 
We observe that, wrt. relation $\tau_{\delta,k}$, results very similar to the ones of the propositions   \ref{Phi-regular} and  \ref{classic-22-Phi-k} hold \cite{N21}.
Moreover, in that framework,  we still do not  know whether  Conds. \ref{1}, \ref{2} can be decided, given a regular code $X$. 

It is noticeable that, although the inclusion ${\cal F}_k\subseteq \tau_{\delta,k}$ holds  (indeed factors are very special subsequences of words), we do not know any more 
whether the relation $\underline{{\cal F}_k}$ is   regular  or not.
In the case where the answer is no,   can after all Conds. \ref{1}, \ref{2} be decidable? 
%

 From another point of view, wrt.   each of the  relations we mentionned, presenting families of codes satisfying all the best Conds. \ref{1}--\ref{2} would be desirable.
\bibliography{mysmallbib}{}
\bibliographystyle{splncs04}
\end{document}